\documentclass[lettersize,journal]{IEEEtran}
\usepackage{tikz}
\usepackage{amsmath, amssymb, amsthm, amsfonts}
\usepackage{filecontents}
\usepackage{booktabs}
\usepackage{tabularx}
\usepackage{multirow}
\usepackage{multicol}
\usepackage{makecell}

\usepackage[colorlinks,
            linkcolor=red,
            urlcolor=black,
            anchorcolor=green,            
            citecolor=cyan]{hyperref}

\ifCLASSOPTIONcompsoc
\usepackage[caption=false, font=normalsize, labelfont=sf, textfont=sf]{subfig}
\else
\usepackage[caption=false, font=footnotesize]{subfig}
\fi

\newcolumntype{P}[1]{>{\centering\hspace{0pt}}p{#1}}
\newcolumntype{Z}{>{\centering\let\newline\\\arraybackslash\hspace{0pt}}X}
\newcolumntype{M}[1]{>{\centering\arraybackslash}m{#1}}

\usepackage{algorithmic}
\usepackage{algorithm}

\newtheorem{theorem}{Theorem}

\newtheorem{definition}{Definition}

\begin{document}

\newcommand{\X}{\mathcal{X}}                
\newcommand{\Xn}{\mathcal{X}^n}             
\newcommand{\Xstar}{\mathcal{X}^*}          

\newcommand{\x}{\mathbf{x}}                 
\newcommand{\y}{\mathbf{y}}
\newcommand{\xt}{x_t}                       
\newcommand{\xprev}{x_1, \dots, x_{t-1}}    
\newcommand{\dist}{d}                       
\newcommand{\Ce}{\mathcal{E}_e}             
\newcommand{\e}{e}                          
\newcommand{\Enc}{\mathsf{Enc}}             
\newcommand{\Dec}{\mathsf{Dec}}             
\newcommand{\msg}{\mathcal{M}}              

\newcommand{\G}{\mathcal{G_\theta}}         
\newcommand{\Ga}{\mathcal{G}_{\mathrm{a}}}  
\newcommand{\B}{\mathcal{C}}                
\newcommand{\Gen}{\mathsf{Gen}}             

\newcommand{\ppi}{\mathcal{P}_{\Pi}}        
\newcommand{\ptheta}{\mathcal{P}_\theta}    
\newcommand{\pmsg}{\mathcal{P}_{\msg}}      
\newcommand{\pxi}{\mathcal{P}_{\Xi}}        

\newcommand{\Dkl}{D_{\mathrm{KL}}}          
\newcommand{\Adv}{\mathsf{Adv}}             
\newcommand{\negl}{\mathrm{negl}}           
\newcommand{\A}{\mathcal{A}}                

\newcommand{\Ha}{\mathrm{H}}                  
\newcommand{\I}{\mathrm{I}}                  
\newcommand{\Hb}{\mathrm{H}_\mathrm{b}}      
\newcommand{\rvM}{M}                         
\newcommand{\rvxi}{\Xi}                     
\newcommand{\rvX}{X}                         



\title{Alkaid: Resilience to Edit Errors in Provably Secure Steganography via Distance-Constrained Encoding}

\author{
Zhihan Cao$^1$,
Gaolei Li$^1$,
Jun Wu$^1$,
Jianhua Li$^1$,
Hang Zhang$^2$, and Mingzhe Chen$^3$
\\
\textsuperscript{1}\emph{Shanghai Jiao Tong University, Shanghai, China}\\
\textsuperscript{2}\emph{Cornell University, Ithaca, NY, USA}\\
\textsuperscript{3}\emph{University of Miami, Coral Gables, FL, USA}
}

\maketitle

\begin{abstract}

While provably secure steganography provides strong concealment by ensuring stego carriers are indistinguishable from natural samples, such systems remain vulnerable to real-world edit errors (e.g., insertions, deletions, substitutions) because their decoding depends on perfect synchronization and lacks error-correcting capability. To bridge this gap, we propose Alkaid, a provably secure steganographic scheme resilient to edit errors via distance-constrained encoding. The key innovation integrates the minimum distance decoding principle directly into the encoding process by enforcing a strict lower bound on the edit distance between codewords of different messages. Specifically, if two candidate codewords violate this bound, they are merged to represent the same message, thereby guaranteeing reliable recovery. While maintaining provable security, we theoretically prove that Alkaid offers deterministic robustness against bounded errors. To implement this scheme efficiently, we adopt block-wise and batch processing. Extensive experiments demonstrate that Alkaid achieves decoding success rates of 99\% to 100\% across diverse error channels, delivers a payload of 0.2 bits per token for high embedding capacity, and maintains an encoding speed of 6.72 bits per second, significantly surpassing state-of-the-art (SOTA) methods in robustness, capacity, and efficiency.

\end{abstract}

\section{Introduction}
Steganography is a fundamental primitive for information hiding, whose objective is to embed confidential messages into apparently benign carriers, such as images, audio signals, or text, so as to enable covert communication~\cite{simmons1984prisoners}. Classical techniques, such as least significant bit (LSB) embedding~\cite{aura1996practical}, realize this objective by modifying the least significant bits of pixel values. More recent advances based on deep neural networks (DNNs), including the synthesis of carriers via generative adversarial networks (GANs)~\cite{li2022gan,ramandi2024vidagan,malik2025hybrid} and adaptive embedding through encoder–decoder frameworks~\cite{bui2023rosteals,ge2025cross,li2023steganography,li2024cover}, have further enhanced both imperceptibility and embedding capacity. Nevertheless, despite sustained methodological progress, existing techniques continue to face two intrinsic limitations:

\textbf{\emph{(i)}} \textbf{Lack of provable security guarantees.} In the presence of increasingly sophisticated steganalysis adversaries, most existing methods rely on heuristic notions of imperceptibility and therefore \emph{fail to provide formal, complexity-theoretic assurances of security,} leaving the statistical indistinguishability of stego carriers from natural samples vulnerable under polynomial-time attacks~\cite{boroumand2018deep}.
\textbf{\emph{(ii)}} \textbf{Insufficient robustness to edit errors.} In real-world deployment, stego objects are frequently exposed to \emph{insertion, deletion, or substitution operations} \textbf{(edit errors)} arising from cropping, reformatting, compression, or transmission artifacts~\cite{editerrorscheng}. Even minor structural perturbations, such as the removal of a single image row or the deletion of a short textual segment, can disrupt the alignment of the embedded message, thereby rendering reliable decoding infeasible and \emph{severely degrading system reliability}.

To enhance security, the paradigm of \textbf{Provably Secure Steganography (PSS)} has been proposed~\cite{kaptchuk2021meteor,ding2023discop,bai2025shimmer,wang2025sparsamp,liao2025framework,bai2025provably}. This framework assumes the existence of a publicly known probability distribution that models the statistics of natural carriers, for instance as induced by large language models (LLMs) or generative image models. The communicating parties share a secret key and employ a pseudorandom generator (PRG) that is \emph{computationally indistinguishable} from a truly random source to drive a deterministic sampling procedure over this distribution. Specifically, the sender encodes a message by using the PRG-guided procedure to deterministically sample a carrier from the distribution, while the receiver, equipped with the same key and PRG, reconstructs the sampling process and recovers the message by identifying the unique sample consistent with the received carrier. Since the PRG is computationally indistinguishable from uniform randomness, the resulting stego carriers are themselves \emph{computationally indistinguishable from genuine samples drawn from the underlying distribution,} thereby achieving provable security in the standard computational sense.

However, despite these strong formal guarantees of concealment, such schemes exhibit pronounced fragility in realistic environments, as common transformations including social media compression, text normalization, and resampling \emph{introduce edit errors that frequently result in decoding failure}. This fragility stems from three structural properties: \textbf{\emph{(i)}} \textbf{strong synchronization dependence,} as decoding requires an exact reproduction of the sender’s generation process; \textbf{\emph{(ii)}} \textbf{lack of error awareness,} such that the receiver cannot detect or localize modifications to the carrier; and \textbf{\emph{(iii)}} \textbf{absence of redundancy or error-correcting structure,} since existing schemes allocate no redundancy in order to maximize embedding capacity.

\emph{Thus, how to achieve \textbf{resilience to edit errors} while maintaining provable security in steganography?}

To address this problem, we propose \textbf{\emph{Alkaid}}, a provably secure steganographic scheme based on \textbf{distance-constrained encoding} that is resilient to edit errors. The core of the system lies in systematically integrating the \emph{minimum distance decoding principle}~\cite{hamming1950error} from error-correcting code theory into the steganographic encoding process.  Guided by this principle, we observe that the robustness of decoding fundamentally depends on a measurable distance (e.g., Hamming distance~\cite{hamming1950error} or Levenshtein distance~\cite{lcvenshtcin1966binary}) between the codewords corresponding to different messages. If the codewords for two distinct messages are too close in the sequence space, even minor edits can easily cause decoding ambiguity.

To this end, we proactively impose a \emph{strict sequence distance constraint} during the encoding phase. Specifically, while generating candidate codewords for messages, the algorithm dynamically ensures that the distance between codewords for different messages is not below a predefined threshold. If the codewords for two different messages fall below this threshold, the system regards them as identical and encodes them into a single message. This design guarantees that, as long as the edit errors do not exceed a certain bound, the receiver can uniquely and correctly recover the original message by applying the minimum distance decoding principle, thereby \emph{achieving tolerance to edit errors}.

Furthermore, through formal proof, we demonstrate that the system maintains provable security while establishing a deterministic robustness guarantee against bounded edit errors, achieving a rigorous unification of security and robustness.

The main contributions of this paper is summarized in the following:
\begin{itemize}
    \item \textbf{Theoretical Framework.} We propose a provably secure distance-constrained encoding algorithm to address decoding failures caused by edit errors. By actively introducing edit distance constraints into the encoding process, the algorithm establishes a theoretical link between provable security and robustness against edit errors. We formally prove it simultaneously achieves information-theoretic security and reliable recovery from bounded edit errors.
    \item \textbf{Efficient Implementation.} To operationalize the theoretical framework, we present Alkaid, a provably secure steganographic scheme that efficiently implements distance-constrained encoding through innovative block-wise processing and parallelization, overcoming the computational bottleneck in encoding and decoding under strict edit distance constraints.
    \item \textbf{Comprehensive Evaluation.} Extensive experimental results demonstrate that Alkaid outperforms existing methods by achieving decoding success rates of 99\% to 100\% across various error channels, delivers a payload of 0.2 bits per token for high embedding capacity, and maintaining an efficient encoding speed of 6.72 bits per second, thereby surpassing state-of-the-art (SOTA) methods in robustness, capacity, and efficiency.
\end{itemize}

\section{Background and Related Works}
\subsection{Preliminaries}

\begin{definition}[Generative Model]
\label{def:generate-model}
Let $\mathcal{\X}$ be a finite discrete sample space, and let $\Xn$ denote the space of sequences of length $n$.
A \textbf{generative model} $\mathcal{\G}$ is a probability distribution $\ptheta$, parameterized by $\theta \in \Theta$, defined over the sequence space $\Xn$.
Such a model can be realized via a deterministic function $\mathrm{Gen}_\theta : \mathcal{Z} \to \Xn$, where $\mathcal{Z}$ is the support of a random variable $\zeta$, satisfying that when $\zeta \sim \mathcal{P}_{\mathcal{Z}}$, the output distribution of $\mathrm{Gen}_\theta(\zeta)$ is exactly $\ptheta$.
\end{definition}

An \emph{autoregressive generative model} is a class of generative models that generate sequences in a stepwise manner. 
Models in this class define a joint distribution over sequences that factorizes as
\begin{equation}
\ptheta(x_1, \dots, x_n) = \prod_{t=1}^n \ptheta(x_t \mid x_1, \dots, x_{t-1}).
\end{equation}
When generating a sequence $\mathbf{x} \in \mathcal{X}^n$, the model successively samples each token according to this factorization:
\begin{equation}
x_t \sim \ptheta(x_t \mid x_1, \dots, x_{t-1}), \quad t = 1, \dots, n.
\end{equation}

Most contemporary large language models (LLMs)~\cite{brown2020language, TheC3,deepseek_2025,geminiteam2025} adopt this generative paradigm, where the subsequent content can typically be controlled by specifying a prefix (or history).

\begin{definition}[Edit Error Channel]
\label{def:edit-channel}
An \textbf{edit error channel} $\Ce$ randomly maps an input sequence $\x \in \Xn$ to an output sequence $\tilde{\x} \in \X^\ast$.
For any input sequence $\x$, the channel independently performs a basic edit operation (insertion, deletion, or substitution) at each position with probability $e$, and satisfies the following expected edit distance constraint:
\begin{equation}
    \mathbb{E}[\dist(\x, \tilde{\x})] \leq e \cdot n,
\end{equation}
where the \textbf{edit distance} (Levenshtein distance) $\dist(\mathbf{a}, \mathbf{b})$ denotes the minimum number of basic edit operations required to transform sequence $\mathbf{a}$ into sequence $\mathbf{b}$, and $e \in [0,1]$ is referred to as the error rate of the channel.
\end{definition}

\begin{definition}[Pseudorandom Generator]
Let $\lambda$ be the security parameter, and let $G: \{0,1\}^\lambda \to \{0,1\}^{\ell(\lambda)}$ be a deterministic polynomial-time algorithm, where $\ell(\cdot)$ is a polynomial and $\ell(\lambda) > \lambda$.
For any probabilistic polynomial-time (PPT) distinguisher $\A$, its advantage is defined as
\begin{equation}
\Adv_{\A,G}(\lambda) = 
\left|
\begin{aligned}
\Pr_{\mathsf{sk} \gets \{0,1\}^\lambda}\big[\A(1^\lambda, G(\mathsf{sk})) = 1\big] \\
{} - \Pr_{\mathcal{S} \gets \{0,1\}^{\ell(\lambda)}}\big[\A(1^\lambda, \mathcal{S}) = 1\big]
\end{aligned}
\right|,
\end{equation}
where the probabilities are taken over the uniform random choice of $\mathsf{sk}$ and $\mathcal{S}$, as well as the internal randomness of $\A$.
Then, $G$ is called a \textbf{pseudorandom generator (PRG)} if for every PPT distinguisher $\A$, there exists a negligible function $\negl(\cdot)$ such that
$
\Adv_{\A,G}(\lambda) \leq \negl(\lambda).
$
\end{definition}

\begin{definition}[Generative Steganographic Scheme]
\label{def:stegosystem}
A \textbf{generative steganographic scheme} $\Pi = (\G, \msg, \Xi, \Enc, \Dec)$ consists of the following components:
\begin{itemize}
    \item a \textbf{generative model} $\G$, as defined in Definition~\ref{def:generate-model}, whose output distribution is denoted by $\ptheta$;
    \item a \textbf{message space} $\msg$, which is a finite set of messages;
    \item an \textbf{encoding parameter space} $\Xi$, which is a collection of control parameters shared by the encoder and the decoder;
    \item an \textbf{encoder} $\Enc: \msg \times \Xi \to \Xn$ such that, for any $m \in \msg$ and $\xi \in \Xi$,
    $
    \Enc(m,\xi) = \Gen_\theta\big(f(m,\xi)\big),
    $
    where $f: \msg \times \Xi \to \mathcal{Z}$ is a deterministic function mapping the message and encoding parameters to the input space of the generative model, and $\Gen_\theta$ is the sampling function of the generative model;
    \item a \textbf{decoder} $\Dec: \Xn \times \Xi \to \msg$ satisfying that, for any $m \in \msg$ and $\xi \in \Xi$,
    $
    \Dec(\Enc(m,\xi), \xi) = m.
    $
\end{itemize}
\end{definition}

\begin{definition}[Information-Theoretic Security]
\label{def:information-theoretic-security}
Let the generative steganographic scheme $\Pi$ be as defined in Definition~\ref{def:stegosystem}, where the message $m$ is distributed according to $\pmsg$ and the encoding parameter $\xi$ is distributed according to $\pxi$.
Let $\ptheta$ denote the probability distribution induced by the generative model $\G$.
The output distribution of the steganographic scheme, denoted by $\ppi$, is defined as
\begin{equation}
\ppi(\x) = \Pr_{\substack{ m \sim \pmsg\\ \xi \sim \pxi }}[\Enc(m,\xi) = \x], \quad \forall \x \in \Xn,
\end{equation}
where $\Dkl(\ppi \parallel \ptheta)$ denotes the Kullback--Leibler (KL) divergence from $\ppi$ to $\ptheta$, defined as
\begin{equation}
\Dkl(\ppi \parallel \ptheta) = \sum_{\x \in \Xn} \ppi(\x) \log \frac{\ppi(\x)}{\ptheta(\x)},    
\end{equation}
which quantifies the  distinguishability between the stego output distribution $\ppi$ and the cover distribution $\ptheta$.
If $\Dkl(\ppi \parallel \ptheta) \leq \epsilon$, then $\Pi$ is said to be \textbf{$\epsilon$-secure}.
If $\Dkl(\ppi \parallel \ptheta) = 0$, then $\Pi$ is said to be \textbf{perfectly secure}.
\end{definition}

\begin{definition}[Computational Security]
\label{def:computational-indistinguishability-security}
Let $\lambda$ be the security parameter, and let the steganographic scheme $\Pi$ be as defined in Definition~\ref{def:stegosystem}, with all its parameters being functions of $\lambda$.
For any probabilistic polynomial-time (PPT) distinguisher $\A$, its advantage is defined as

\begin{equation}
\Adv_{\A,\Pi}(\lambda) =
\left|
\begin{aligned}
\Pr_{\substack{m \sim \pmsg \\ \xi \sim \pxi}}
\big[\A(1^\lambda, \Enc(m,\xi)) = 1\big] \\
{} - \Pr_{\x \sim \ptheta}
\big[\A(1^\lambda, \x) = 1\big]
\end{aligned}
\right|,
\end{equation}
and for all PPT distinguishers $\A$, if there exists a negligible function $\negl(\cdot)$ such that
$
\Adv_{\A,\Pi}(\lambda) \leq \negl(\lambda),
$
then $\Pi$ is said to be \textbf{computationally secure}.
\end{definition}

\subsection{Related Works}

\subsubsection{Theoretical Secure Steganography}
The formal security definition of steganography serves as the theoretical cornerstone for its development. Traditional digital steganographic methods primarily pursue high embedding capacity and empirical imperceptibility but lack rigorous security proofs~\cite{wangBreakGenerativeSteganography2026,zhang-etal-2021-provably}. Cachin~\cite{cachin1998information} defined the security of steganography from an information-theoretic perspective, using the Kullback-Leibler divergence~\cite{kullback1951information} between the cover distribution and the stego distribution as a measure (as formalized in Definition~\ref{def:information-theoretic-security}), thereby laying the theoretical foundation for steganalysis. Hopper et al.~\cite{hopper2002provably} further proposed a provably secure framework based on computational complexity theory (as formalized in Definition~\ref{def:computational-indistinguishability-security}), where security is defined through a probabilistic polynomial-time distinguishing game: a system is considered secure if an adversary cannot distinguish with significant advantage between outputs from the natural channel distribution \(O_D\) and those from the steganographic encoder \(\text{Encode}_D\). These works established strict security objectives and evaluation paradigms for all subsequent provably secure steganographic schemes from both information-theoretic and computational complexity perspectives.

\subsubsection{Practical and Secure Steganography}
Within the above theoretical framework, researchers have endeavored to construct efficient and secure practical systems. Early rejection sampling and arithmetic coding schemes, while satisfying security definitions, were inefficient and difficult to apply in practice~\cite{hopper2002provably,von2004public,backes2005public}. Meteor~\cite{kaptchuk2021meteor} realized a provably secure practical system on language generation models by pseudorandomizing secret information to avoid the randomness reuse problem, though its embedding capacity was limited by per-step entropy wastage. Discop~\cite{ding2023discop} significantly increased capacity by constructing multiple "distribution copies" and sampling using a balanced binary tree, but introduced additional computational overhead. Shimmer~\cite{bai2025shimmer} introduced an entropy collection mechanism that recycles residual entropy from the previous step into subsequent encoding, further optimizing the balance between capacity and efficiency. Sparsamp~\cite{wang2025sparsamp} employed a sparse sampling strategy to pursue higher time efficiency while maintaining security. Despite continuous progress in security and capacity, these schemes share a fundamental flaw: their encoding and decoding processes must remain perfectly synchronized. Any editing operation on the carrier sequences disrupts this synchronization context, preventing the decoder from reconstructing the encoding path and leading to decoding failure. This acute vulnerability to edit errors severely restricts the practical application of the aforementioned schemes in real, unreliable channels.

\begin{figure*}[!ht]%
\centering
\includegraphics[width=0.95\linewidth]{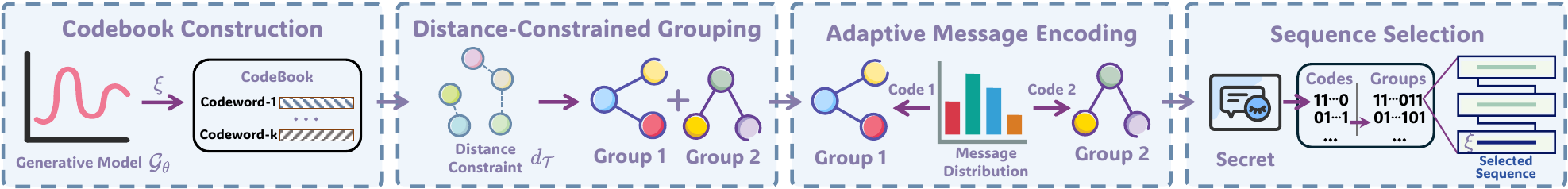}
    \caption{The process of distance-constrained encoding involves four mainsteps: \textbf{\emph{(i)}} \textbf{Codebook Construction,} where a generative model $\G$ utilizes encoding parameters $\xi$ to produce $k$ sequences as codewords; \textbf{\emph{(ii)}} \textbf{Distance-Constrained Grouping,} in which codewords with edit distances below a given threshold $d_{\mathcal{T}}$ are grouped together; \textbf{\emph{(iii)}} \textbf{Adaptive Message Encoding,} which dynamically allocates codes to represents specific message within each group; \textbf{\emph{(iv)}} \textbf{Sequence Selection,} where the specific secret message and encoding parameters $\xi$ together determine the unique sequence to serve as the stego carrier.}
    \label{fig.dce}
\end{figure*}

\subsubsection{Robust and Secure Steganography}
Robustness is a key attribute for ensuring the practical reliability of steganographic systems. Traditional robust steganography research primarily focuses on resisting conventional signal processing operations like JPEG compression, noise addition, and resampling~\cite{zeng2022improving,qiao2021robust,yu2023cross,yang2025novel}. Its design is usually based on heuristic methods or signal processing techniques and is not situated within a provably secure formal framework. Consequently, its security is difficult to guarantee. In recent years, some studies have begun to explore robustness issues while aiming for provable security. For example, Zamir~\cite{zamir2024undetectable} explored combining pseudorandom error-correcting codes~\cite{christ2024pseudorandom} with steganography to enhance undetectability but did not conduct an in-depth analysis of its tolerance to edit errors. The scheme proposed by Bai et al.~\cite{bai2025provably} for asymmetric resource scenarios mainly targets robustness against substitution errors. STEAD~\cite{qistead} achieved robustness against edit errors on diffusion models, but its method heavily relies on specific model structures, has limited generality, and suffers from low embedding capacity. Thus, within the paradigm of provably secure steganography, how to systematically introduce robustness against edit errors remains a critical yet insufficiently addressed problem.

\section{Distance-Constrained Encoding}

How can we construct a steganographic scheme that is robust against edit errors? A feasible approach is based on the \emph{minimum distance decoding} principle: given a received sequence $\tilde{\x}$ and the known encoding parameter $\xi$, the decoder selects the message corresponding to the valid codeword that is closest to $\tilde{\x}$ in terms of edit distance, as defined by
\begin{equation}
\Dec(\tilde{\x}, \xi) = \arg\min_{m \in \msg} \dist\bigl(\tilde{\x}, \Enc(m, \xi)\bigr).
\end{equation}

However, the effectiveness of minimum distance decoding relies on an important assumption: \emph{codewords corresponding to different messages must be sufficiently separated}. If two codewords are too close, channel errors can easily make the received sequence closer to another codeword, resulting in a decoding error. Therefore, to ensure robustness, we must impose an explicit distance constraint during encoding, namely, the edit distance between any two codewords corresponding to different messages must be no smaller than a threshold $d_\mathcal{T}$:
\begin{equation}
 \dist\bigl(\Enc(m_i, \xi), \Enc(m_j, \xi)\bigr) > d_\mathcal{T}, \quad \forall m_i \neq m_j \in \msg, \ \xi \in \Xi.   
\end{equation}

To achieve robustness while maintaining information-theoretic security, we propose \textbf{distance-constrained encoding}. The key idea is to assign codewords that are sufficiently far apart for different messages by imposing an edit-distance constraint during encoding, so that the decoder can reliably correct errors introduced by the channel. The procedure can be summarized in the following four steps:
\subsection{Main Steps}

\noindent \textbf{Codebook Construction.}
Given the encoding parameter $\xi = (\zeta_1, \dots, \zeta_k, \eta) \in \Xi$, where $\zeta_1, \dots, \zeta_k$ are used to generate codewords and $\eta$ controls intra-group selection, we use the generative model $\G$ to produce the initial codebook $\B^\xi = \{\x^{(1)}, \dots, \x^{(k)}\}$, where $\x^{(i)} = \Gen_\theta(\zeta_i)$.

\noindent \textbf{Distance-Constrained Grouping.}
Set a minimum distance threshold $d_\mathcal{T} > 0$. Partition $\B^\xi$ into several groups so that any two sequences from different groups satisfy $\dist(\x, \x') > d_\mathcal{T}$. The resulting $K$ mutually disjoint groups are $\mathcal{O}^\xi_1, \dots, \mathcal{O}^\xi_K$, with the property that for any $i \neq j$, $\dist(\x, \x') > d_\mathcal{T}$ for all $\x \in \mathcal{O}^\xi_i, \x' \in \mathcal{O}^\xi_j$.

\noindent \textbf{Adaptive Message Encoding.} Assign a unique message $m$ to each group $\mathcal{O}^\xi_m$, giving $|\msg^\xi| = K$. To maintain equal probability usage of codewords, the occurrence probability of $m$ should be proportional to the size of its group, which can be written as $\pmsg^\xi(m) = \frac{|\mathcal{O}^\xi_m|}{k}$. This requires adaptive message encoding depending on the grouping situation.

\noindent \textbf{Sequence Selection}. Using $\eta$ in $\xi$, a deterministic function $h: \msg \times \mathcal{H} \to \Xn$ selects the codeword sequence, where $\mathcal{H}$ is the value space of $\eta$. The function satisfies: for any message $m$ and encoding parameter $\xi$, when $\eta \sim \mathcal{P}_{\mathcal{H}}$, $h(m,\eta)$ preserves the original probability distribution over the group $\mathcal{O}^\xi_m$. That is, for any $\x \in \mathcal{O}^\xi_m$, $\Pr_{\eta \sim \mathcal{P}_{\mathcal{H}}}[h(m,\eta) = \x] = \frac{t_\x(\xi)}{|\mathcal{O}^\xi_m|}$, where $t_\x(\xi)$ denotes the number of occurrences of $\x$.

As depicted in Fig.~\ref{fig.dce}, based on the above steps, the encoder $\Enc$ determines and transmits the codeword from the corresponding group according to the message \( m \) and the parameter \( \eta \). During decoding, the  edit distance threshold \( d_{\mathcal{T}} \) enforced between groups provides a clear discriminative boundary for minimum distance decoding, thereby theoretically ensuring robustness against a certain degree of channel interference. The entire scheme transforms the structured codebook design into a foundation for reliable transmission.

\subsection{Security}

\begin{theorem}
\label{thm:distance-constraint-security}
Let $\Pi$ be a steganographic scheme in which the encoder $\Enc$ uses distance-constrained encoding, and its encoding parameter $\xi = (\zeta_1, \dots, \zeta_k, \eta)$ consists of mutually independent elements sampled randomly from their respective spaces, and all message spaces $\msg^\xi$ follow the distribution $\pmsg^\xi(m) = \frac{|\mathcal{O}^\xi_m|}{k}$, then the steganographic scheme is \textbf{perfectly secure}. That is, its output distribution satisfies $\ppi = \ptheta$.
\end{theorem}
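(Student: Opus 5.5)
We condition on the codebook: fix $\zeta_1,\dots,\zeta_k$, which determines the codebook $\B^\xi$, the groups $\mathcal{O}^\xi_1,\dots,\mathcal{O}^\xi_K$ and the multiplicities $t_\x(\xi)$. We compute the conditional output distribution of the encoder given the codebook, and then average over the independent $\zeta_i$. The expected outcome is that, conditionally on the codebook, the stego output is a uniformly random draw (with multiplicity) from the $k$ codewords. Since each codeword $\Gen_\theta(\zeta_i)$ has marginal law $\ptheta$ by Definition~\ref{def:generate-model}, the unconditional output then has law $\ptheta$.

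\textbf{Step 1 (conditional law).} Fix $(\zeta_1,\dots,\zeta_k)$ and some $\x\in\B^\xi$. Because the groups partition the multiset $\B^\xi$, $\x$ lies in exactly one group $\mathcal{O}^\xi_{m_\x}$. The encoder can only output $\x$ if the message is $m_\x$. The message is drawn from $\pmsg^\xi$, and $\eta$ is independent of it, so by the Sequence Selection property
\begin{equation}
\Pr[\Enc(m,\xi)=\x \mid \zeta_{1:k}] = \pmsg^\xi(m_\x)\cdot \frac{t_\x(\xi)}{|\mathcal{O}^\xi_{m_\x}|} = \frac{|\mathcal{O}^\xi_{m_\x}|}{k}\cdot\frac{t_\x(\xi)}{|\mathcal{O}^\xi_{m_\x}|} = \frac{t_\x(\xi)}{k}.
\end{equation}
For $\x\notin\B^\xi$ this probability is $0 = t_\x(\xi)/k$. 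In either case it equals $\frac1k\sum_{i=1}^k \mathbf{1}[\Gen_\theta(\zeta_i)=\x]$.

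\textbf{Step 2 (averaging).} Take the expectation over the mutually independent $\zeta_i\sim\mathcal{P}_{\mathcal{Z}}$ and use linearity:
\begin{equation}
\ppi(\x)=\frac1k\sum_{i=1}^k \Pr_{\zeta_i}[\Gen_\theta(\zeta_i)=\x]=\ptheta(\x).
\end{equation}
Hence $\Dkl(\ppi\parallel\ptheta)=0$, which is perfect security in the sense of Definition~\ref{def:information-theoretic-security}.

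\textbf{Main obstacle.} The delicate point is making precise the role of the message distribution. The message space $\msg^\xi$ and its law depend on $\zeta_{1:k}$, so the message must be treated as drawn conditionally on the codebook, and independently of $\eta$. Justifying this is what the "adaptive message encoding" assumption provides: in practice the message is pseudorandomized and then mapped adaptively. Duplicated codewords are a second point to check. These are handled by working with multiset counts $t_\x(\xi)$: the group sizes $|\mathcal{O}^\xi_m|$ then sum to $k$, and the Sequence Selection property is stated with multiplicities, so no case split is needed.
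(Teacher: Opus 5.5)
Your argument is correct and is essentially the paper's proof. Like the paper, you derive the conditional output law $t_\x(\xi)/k$ from $\pmsg^\xi(m)=|\mathcal{O}^\xi_m|/k$ and the Sequence Selection property, then average using $\mathbb{E}[t_\x(\xi)]=k\,\ptheta(\x)$ by linearity. Conditioning only on $\zeta_{1:k}$ (the paper conditions on all of $\xi$, which strictly speaking would already fix the selection made through $\eta$) and treating $\x\notin\B^\xi$ explicitly are small tightenings of the same route.
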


The theorem above demonstrates that steganographic text (stegotext) generated using the distance-constrained encoding algorithm is \emph{indistinguishable} from the model's original random sampling output. The complete proof is provided in Appendix~\ref{appendix:proof_theoreticsecurity}.

\subsection{Robustness}

\begin{theorem} 
\label{thm:distance-constraint-robustness} 
Let $\Pi$ be a steganographic scheme in which the encoder $\Enc$ uses distance-constrained encoding and the decoder $\Dec$ employs minimum distance decoding. Suppose the edit-error channel $\Ce$ satisfies Definition~\ref{def:edit-channel} with error probability $e$. If the distance constraint satisfies $n >d_\mathcal{T} > 2 e n$, then the system is  \textbf{$(e,\delta)$-robust} with
\begin{equation}
\delta  \triangleq \exp\!\left(- \frac{(d_\mathcal{T} - 2 e n)^2}{2 n} \right), 
\end{equation}
and for any encoding parameter $\xi$ and any message $m \in \msg^\xi$, 
\begin{equation}
\Pr[\Dec(\Ce(\Enc(m,\xi)), \xi) \neq m] \le \delta.   
\end{equation}

\end{theorem}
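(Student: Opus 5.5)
We combine a deterministic reduction to a triangle-inequality condition with a concentration bound on the number of channel edits. Fix $\xi$ and $m\in\msg^\xi$. Let $\x=\Enc(m,\xi)\in\mathcal{O}^\xi_m$, let $\tilde{\x}=\Ce(\x)$, and let $D=\dist(\x,\tilde{\x})$. We read minimum distance decoding as choosing the group containing the codeword of $\B^\xi$ closest to $\tilde{\x}$.

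\textbf{Deterministic decoding condition.} Suppose $2D< d_\mathcal{T}$. For any $\x'\in\mathcal{O}^\xi_{m'}$ with $m'\neq m$, the grouping step guarantees $\dist(\x,\x')>d_\mathcal{T}$. Since the Levenshtein distance is a metric, the triangle inequality gives
\begin{equation}
\dist(\tilde{\x},\x')\ \ge\ \dist(\x,\x')-D\ >\ d_\mathcal{T}-D\ >\ D=\dist(\tilde{\x},\x).
\end{equation}
Hence the nearest codeword to $\tilde{\x}$ lies in $\mathcal{O}^\xi_m$, and $\Dec$ returns $m$. Ties cannot occur because the inequality is strict. It follows that
\begin{equation}
\Pr[\Dec(\tilde{\x},\xi)\neq m]\le \Pr[2D\ge d_\mathcal{T}].
\end{equation}

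\textbf{Concentration.} By Definition~\ref{def:edit-channel}, the channel acts independently at each of the $n$ positions, performing one edit with probability $e$. Define indicators $B_1,\dots,B_n$, where $B_t$ records whether an edit occurs at position $t$. Then $D\le S:=\sum_t B_t$, because each operation changes the edit distance by at most one. Also $\mathbb{E}[S]\le en$, consistent with the expected-distance constraint in the definition. Therefore
\begin{equation}
\Pr[2D\ge d_\mathcal{T}]\le \Pr\!\left[S\ge \tfrac{d_\mathcal{T}}{2}\right]=\Pr\!\left[S-\mathbb{E}S\ge \tfrac{d_\mathcal{T}}{2}-en\right].
\end{equation}
The hypothesis $d_\mathcal{T}>2en$ makes this deviation $\tfrac{d_\mathcal{T}-2en}{2}$ positive, and $d_\mathcal{T}<n$ keeps the event nontrivial. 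We now apply Hoeffding's inequality to the $n$ independent $[0,1]$-valued summands.

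\textbf{Main obstacle: matching the stated $\delta$.} The standard Hoeffding bound gives
\begin{equation}
\exp\!\left(-\frac{2\bigl((d_\mathcal{T}-2en)/2\bigr)^2}{n}\right)=\exp\!\left(-\frac{(d_\mathcal{T}-2en)^2}{2n}\right)=\delta,
\end{equation}
which is exactly the claimed expression. So the real work is justifying the model, not the calculation. The model needs $D\le S$ with $S$ a sum of independent bounded per-position terms. If an insertion at a position is allowed to insert several symbols, we instead bound each summand's contribution by a constant and rescale. The intended reading of Definition~\ref{def:edit-channel}, at most one basic operation per position, gives the clean bound. Combining the deterministic decoding condition with this tail bound yields $\Pr[\Dec(\Ce(\Enc(m,\xi)),\xi)\neq m]\le\delta$ uniformly in $\xi$ and $m$, i.e., $(e,\delta)$-robustness.
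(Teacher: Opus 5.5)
Your proposal is correct and follows the paper's proof essentially step for step. Both use the triangle inequality to show that $2D<d_\mathcal{T}$ forces minimum distance decoding to return $m$, then bound $D\le S=\sum_t B_t$ and apply Hoeffding's inequality with deviation $\tfrac{d_\mathcal{T}}{2}-en$, which gives exactly $\delta$. The differences are cosmetic: you compare $\tilde{\x}$ against every codeword in the other groups (slightly closer to the codebook-level decoder of Algorithm~\ref{alg:decode} than the paper's comparison with $\Enc(m',\xi)$), and your displayed ``$=$'' should be ``$\le$'', since you only assert $\mathbb{E}[S]\le en$.
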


This theorem guarantees a low upper bound for the decoding failure rate when the distance constraint \( d_\mathcal{T} > 2 e n \) . The complete proof is provided in Appendix~\ref{appendix:dce-roubust}. Moreover, in practical deployment, parameters such as the sample size $k$ and the specific algorithm implementation can further reduce the actual failure rate \emph{below this theoretical threshold}. We will analyze the resulting practical robustness in Appendix~\ref{appendix:analysisofrobust}.

\section{Main Design}

From Theorems \ref{thm:distance-constraint-security} and \ref{thm:distance-constraint-robustness}, it follows that a steganographic scheme adopting distance-constrained encoding combined with minimum distance decoding can theoretically guarantee \emph{perfect security and robustness} simultaneously. In practice, however, this method encounters three major obstacles:

\textbf{\emph{(i)}} \textbf{Inefficient generation.} The codebook must be constructed by independently sampling $k$ complete sequences, which requires multiple invocations of the generative model and incurs significant computational overhead.  
    
\textbf{\emph{(ii)}} \textbf{Challenges in parameter synchronization.} The encoding parameter $\xi = (\zeta_1, \dots, \zeta_k, \eta)$ must consist of strictly independent elements following their respective distributions. Achieving such large-scale parameter sharing and synchronization is difficult in distributed environments. 
    
\textbf{\emph{(iii)}} \textbf{Message space dependency.} The message probability $\pmsg^\xi(m) = \frac{|\mathcal{O}^\xi_m|}{k}$ depends on both the generative model $\G$ and the encoding parameter $\xi$. Any change in $\xi$ requires re-encoding the messages to maintain consistency with the corresponding message distribution.

To address the aforementioned issues, we propose a series of methods for efficiently implementing distance-constrained encoding, which significantly improves its practical applicability. The core approaches are elaborated in detail as follows.

\subsection{Core Approaches}

\textbf{Codebook Construction.} To avoid explicit transmission and synchronization of the encoding parameter $\xi$, the encoder and decoder share a secret key $\mathsf{sk}$ and derive the required randomness on the fly using a cryptographically secure pseudorandom generator (PRG). In practical implementation, we can segment the PRG randomly generated sequence into blocks and process them into encoding parameters. Moreover, since most state-of-the-art large language models are autoregressive and generate tokens sequentially, the steganographic scheme must also construct the codebook in a token-by-token manner. Accordingly, the proposed scheme builds the codebook via sequential generation (Algorithm~\ref{alg:cbc}). 

The procedure is as follows. At step $t$, a random value $r_t \in [0,1)$ is first generated using the PRG. The output token is then determined via inverse transform sampling (InvCDF), yielding $x_t = \mathrm{InvCDF}(r_t)$. Repeating this process produces $k$ sequences of length $n_l$, which together form the codebook.

However, even when a cryptographically secure PRG is employed, its outputs are not statistically equivalent to samples drawn independently from the true distribution, but are only computationally indistinguishable from them. Consequently, the system can no longer achieve perfect security in the information-theoretic sense, and the subsequent security analysis is therefore conducted under the notion of computational security.

\begin{algorithm}[!t]
\caption{CodebookConstruction}
\label{alg:cbc}
\renewcommand{\algorithmicrequire}{\textbf{Input:}}
\renewcommand{\algorithmicensure}{\textbf{Output:}}
\begin{algorithmic}[1]
\REQUIRE Model $\G$, history $h$, sample size $k$, block length $n_l$, \textsc{PRG} $\mathcal{R}$.
\ENSURE Codebook $\B = \{b_1,\dots,b_{k}\}$.

\STATE $\B \leftarrow \emptyset$

\FOR{each $i \in [1,k]$}
    \STATE $c_i \leftarrow \emptyset$
    \STATE $\textsf{ctx} \leftarrow h$ 

    \FOR{each step $t \in [1,n_l]$}
        \STATE $p(\cdot)\leftarrow \textsc{NextTokenDist}(\G,\textsf{ctx})$ 
        \STATE $r_{i,t} \leftarrow \mathcal{R}(i,t)$
        \STATE $x_t \leftarrow \textsc{InvCDF}(p,r_{i,t})$ 
        \STATE $c_i \leftarrow c_i \,||\, x_t$; \quad $\textsf{ctx} \leftarrow \textsf{ctx} \,||\, x_t$
    \ENDFOR

    \STATE $\B \leftarrow \B \cup \{c_i\}$
\ENDFOR
\RETURN $\B$
\end{algorithmic}
\end{algorithm}

\noindent \textbf{Distance-Constrained Grouping.} As shown in Algorithm~\ref{alg:dcg}, this grouping procedure assigns all elements whose pairwise edit distance does not exceed the threshold \(d_{\text{min}}\) to the same group. In practice, this can be efficiently implemented using the disjoint set union (DSU). By iteratively evaluating the edit distance between pairs of elements and merging those that satisfy the constraint into the same set, the algorithm ultimately produces a collection of disjoint sets, each corresponding to a valid group under the distance constraint, thereby completing the overall partition efficiently.

\begin{algorithm}[!t]
\caption{DistanceConstraintGrouping}
\label{alg:dcg}
\renewcommand{\algorithmicrequire}{\textbf{Input:}}
\renewcommand{\algorithmicensure}{\textbf{Output:}}
\begin{algorithmic}[1]
\REQUIRE CodeBook $\B=\{c_i\}_{i=1}^{k}$, threshold $d_\mathcal{T}$.
\ENSURE Groups $\mathcal{O}=\{O_1,\dots,O_m\}$.

\STATE  \textsc{DSU.init()} 
\FOR{all $1 \le i < j \le k$}
    \IF{$\dist(b_i,b_j) \le d_\mathcal{T}$}
        \STATE \textsc{Union}$(i,j)$
    \ENDIF
\ENDFOR
\STATE $\mathcal{O} \leftarrow$ \textsc{DSU}.components
\RETURN $\mathcal{O}$
\end{algorithmic}
\end{algorithm}

\noindent \textbf{Adaptive Message Encoding.}
Under fixed encoding parameters, distance-constrained encoding induces a concrete message probability distribution $\pmsg^\xi(m)$. To realize a corresponding encoding scheme, one may design a variable-length code over the message space so that the original distribution is preserved under encoding. Since $\pmsg^\xi(m)=\frac{|\mathcal{O}^\xi_m|}{k}$ and $|\mathcal{O}^\xi_m|$ is an integer, each message probability is an integer multiple of $\frac{1}{k}$. In particular, when $k=q^\ell$, the entire probability space can be partitioned into $q^\ell$ equiprobable units, and each unit can be assigned a unique $\ell$-symbol codeword over an alphabet of size $q$. Each message $m$ corresponds to $|\mathcal{O}^\xi_m|$ such units. And if these units share a common prefix, that prefix can be used as the encoding of message $m$.

For example, let sample size $k=16$ and consider the distribution $[3/4, 1/8, 1/16, 1/16]$, so that the probability space is divided into $16$ units. Under binary encoding, the first message has probability $12/16$ and thus occupies more than half of the units. As a result, no non-empty prefix can be assigned to it, since any such prefix would correspond to a probability strictly smaller than the original message probability; therefore, it must be represented by the empty string $\varnothing$. The second message is then assigned the prefix $110$, corresponding to the next $2$ units. The last two messages each occupy a single unit and are encoded as $1110$ and $1111$, respectively. The resulting encoding is $[\varnothing, 110, 1110, 1111]$. Fig.~\ref{fig.AME} vividly illustrates this example. This construction assigns a common prefix to all units covered by the same message and employs longest-prefix matching at the decoder, thereby ensuring that the induced encoding distribution coincides with the original distribution. For a distribution such as $[3/8, 3/8, 1/4]$, the same allocation procedure yields the encoding $[0,\varnothing,\varnothing]$. In this case, the latter two symbols share an identical encoding and are explicitly merged into a single group, which then participates in subsequent processing as a whole. 

\begin{figure}[!t]%
\centering
\includegraphics[width=0.95\linewidth]{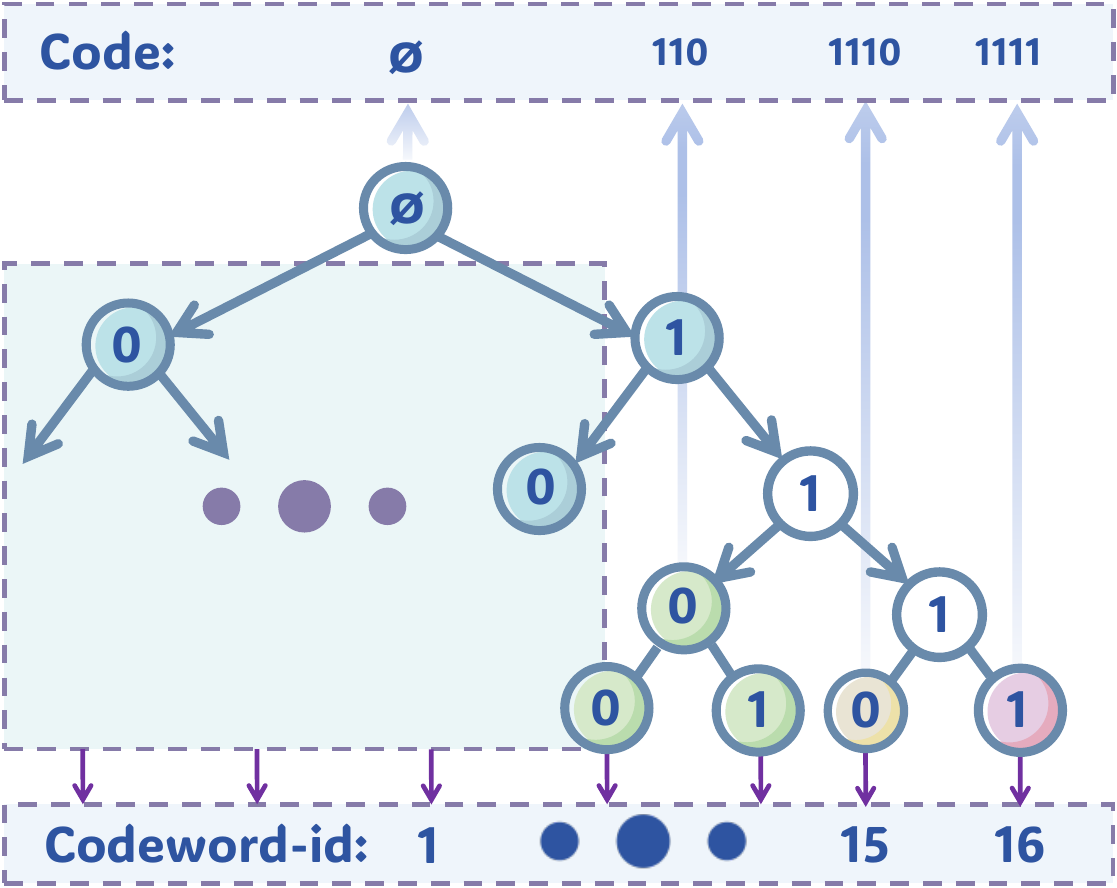}
    \caption{An example of adaptive message encoding: Given a group distribution $[3/4, 1/8, 1/16, 1/16] $, we construct a binary tree of depth 5. The last layer contains 16 nodes, each representing a codeword. The encoding for each group is the common prefix of its codewords, resulting in the final codes: $[\varnothing, 110, 1110, 1111] $.}
    \label{fig.AME}
\end{figure}

The corresponding implementation is given in Algorithm~\ref{alg:ac}. The algorithm achieves adaptive encoding for non-uniform message distributions by dynamically assigning binary prefixes to message groups of different sizes. Its core mechanism maintains a complete binary tree of depth \(\log_2 k\), where each leaf corresponds to a codeword and is initially unoccupied. The size \(|O|\) of a message group \(O\) determines the number of leaves it must occupy. Groups are processed in descending order of size. For the current group, the algorithm searches for a node \(v\) whose subtree contains at least \(|O|\) unoccupied leaves, preferring nodes at greater depth so as to obtain longer prefixes. It then marks \(|O|\) leaves within that subtree as occupied and assigns the path label from the root to \(v\) (a sequence of 0/1 symbols) as the prefix encoding for the group. This process is repeated until all groups have been assigned.

\begin{algorithm}[!t]
\caption{AdaptiveCoding}
\label{alg:ac}
\renewcommand{\algorithmicrequire}{\textbf{Input:}}
\renewcommand{\algorithmicensure}{\textbf{Output:}}
\begin{algorithmic}[1]
\REQUIRE Groups $\mathcal{O}=\{O_1,\dots,O_m\}$, sample size $k$.
\ENSURE Message $\msg$.

\STATE $\mathcal{T} \leftarrow \textsc{BuildTree}(k)$
\STATE $\textsc{$\mathcal{T}$.InitLeaf()}$

\STATE $\mathcal{O}^* \leftarrow \textsc{Sort}(\mathcal{O})$
\STATE $\msg \leftarrow \emptyset$

\FOR{each group $O \in \mathcal{O}^*$}
    \STATE $q \leftarrow |O|$
    \STATE $v \leftarrow \textsc{FindAllocNode}(\mathcal{T}, q)$
    \STATE $p \leftarrow \textsc{PrefixOfNode}(v)$ 
    \STATE $\mathcal{L}_O \leftarrow \textsc{SelectLeaves}(v, q)$ 
    \STATE $\textsc{MarkUsed}(\mathcal{L}_O)$
    \STATE $\msg \leftarrow \msg \cup \{(O \mapsto p)\}$
\ENDFOR

\RETURN $\msg$
\end{algorithmic}
\end{algorithm}

\begin{figure*}[!ht]%
\centering
\includegraphics[width=0.95\linewidth]{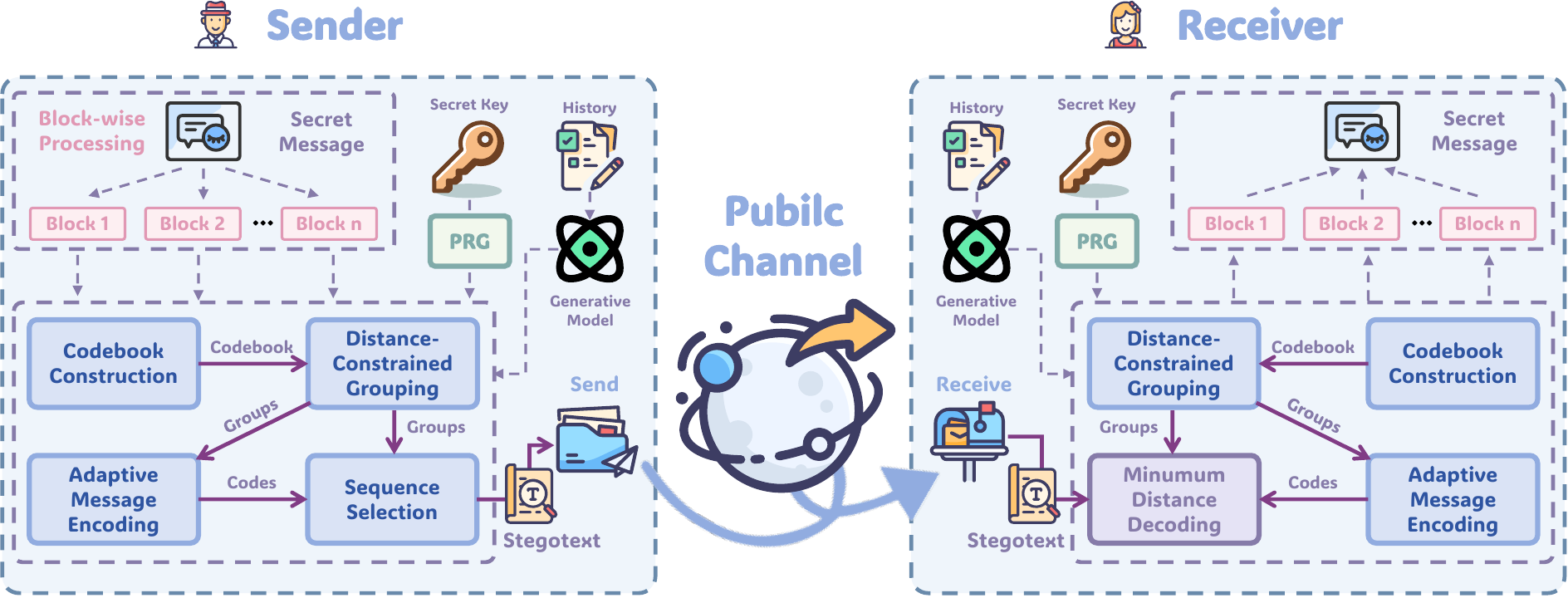}
    \caption{Overview of Alkaid. This scheme requires both the sender and receiver to use the same generative model, historical context, and secret key. On the sender side, the secret message is divided into blocks, each of which undergoes distance-constrained encoding until the entire message is embedded into the stegotext. On the receiver side, the same distance-constrained encoding process is repeated to reconstruct the codebook. Each message block is then recovered using minimum distance decoding, and finally reassembled to obtain the original secret message.}
    \label{fig.overview}
\end{figure*}

\noindent \textbf{Sequence Selection.}
As described in Algorithm~\ref{alg:ss}, the determination of the output sequence proceeds in two stages. 
First, each leaf node represents a probability unit. Given the secret message with the same code length, a unique leaf node is identified, which in turn determines the group to which it belongs. 
Second, the intra-group selection is carried out by generating a random value via a PRG and using it as the input to inverse transform sampling (InvCDF), thereby deterministically producing a concrete sequence. 
By the principle of inverse transform sampling, as long as the PRG output is uniformly random, the final output follows the prescribed distribution over its corresponding group $\mathcal{O}^\xi_m$, which ensures that the resulting procedure constitutes a valid function~$h$.

The introduction of the PRG serves a critical purpose. Sequences within the same group are mutually similar, and without synchronized randomness, the decoder may be unable to uniquely recover the specific sequence chosen by the sender. \emph{By sharing a common seed to synchronize the PRG, the encoder and decoder can perform the same intra-group selection consistently, thereby eliminating ambiguity in decoding.}

\begin{algorithm}[!t]
\caption{SequencesSelection}
\label{alg:ss}
\renewcommand{\algorithmicrequire}{\textbf{Input:}}
\renewcommand{\algorithmicensure}{\textbf{Output:}}
\begin{algorithmic}[1]
\REQUIRE Message $\msg$, secret bits $s$, \textsc{PRG} $\mathcal{R}$.
\ENSURE Selected sequence $t$, consumed prefix $p$.

\STATE $b \leftarrow \textsc{Head}(s)$ 
\STATE $O^\star \leftarrow \textsc{UnitMatch}(\msg,b)$ 
\STATE $p \leftarrow \msg[O^\star]$ 
\STATE $t \leftarrow \textsc{InvCDF}(O^\star,\mathcal{R}(\cdot))$ 
\RETURN $(t,p)$
\end{algorithmic}
\end{algorithm}

\subsection{Overview}

\textbf{Block-wise Processing.}
Constructing the codebook requires independently sampling $k$ complete sequences in advance, which entails repeated invocations of the generative model and incurs substantial computational cost. When the sequence length satisfies $n \gg 0$ and $k = 2^\ell$, the resulting computational complexity is $O(n \cdot 2^\ell)$, which grows exponentially and is often prohibitive in practice. 

To address this issue, a  block-wise strategy can be adopted. Specifically, the $\ell$-bit message is partitioned into multiple smaller blocks, each containing only $b$ bits (for instance, $b = 1$). For each block, only $2^b$ candidate sequences are generated, and the overall output sequence is produced by generating and concatenating these blocks sequentially. If each block generates a subsequence of length $n_l$, the total complexity is reduced to 
$O\!\left( n_l \cdot 2^b \cdot \frac{\ell}{b} \right)$. 
Choosing small values of $n_l$ and $b$ can therefore substantially improve generation efficiency. 

However, it should be noted that under an error-correction design based on minimum distance decoding, reducing the block length also correspondingly weakens the error-correction capability. Therefore, practical deployment requires a careful trade-off between computational cost and robustness, tailored to the specific application scenario.

\noindent \textbf{Encoding Procedure.}
The encoding process follows the distance-constrained encoding paradigm. As illustrated in the Fig.~\ref{fig.overview}, the steganographic workflow operates in a block-wise manner. The detailed procedure is given in Algorithm~\ref{alg:encode}. Here, we assume the secret message is a bit string. 

First, the $\ell$-bit secret message to be transmitted is partitioned into multiple blocks of length $b$ bits. For each block, conditioned on the current historical context, the system generates $k = 2^b$ candidate sequences of length $n_l$ to form a codebook, which is then partitioned into groups according to the edit-distance constraint. 
Next, adaptive message encoding is applied to assign an identifier to each group, and the group corresponding to the current secret message block is selected based on its content. Subsequently, within the chosen group, a specific output sequence is deterministically selected using pseudorandomness. The selected sequence is appended to the end of the historical context and serves as the input for the next generation round. Finally, the system computes the amount of secret information encoded in the current block and removes it from the remaining message to be embedded. Bits that fail to be encoded this time will be carried over to the next processing cycle. This procedure is iterated until all secret information has been embedded into the generated sequence.

\begin{algorithm}[!t]
\caption{Encode}
\label{alg:encode}
\renewcommand{\algorithmicrequire}{\textbf{Input:}}
\renewcommand{\algorithmicensure}{\textbf{Output:}}
\begin{algorithmic}[1]
\REQUIRE Model $\G$, history $h$, secret bits $s$, sample size $k$, distance constraint $d_\mathcal{T}$, block length $n_l$, seed $\mathcal{S}$.
\ENSURE Encoded token sequence $T$.
\STATE $T \leftarrow \emptyset$
\STATE $\mathcal{R} \leftarrow \textsc{PRG.Init($\mathcal{S}$)}$
\WHILE{$len(s) > 0$}
    \STATE $\B \leftarrow \textsc{CodebookConstruction}(\G,h,k,n_l,\mathcal{R})$ 
    \STATE $\mathcal{O} \leftarrow \textsc{DistanceConstraintGrouping}(\B,d_\mathcal{T})$ 
    \STATE $\msg \leftarrow \textsc{AdaptiveCoding}(\mathcal{O})$ 
    \STATE $(t,p) \leftarrow \textsc{SequencesSelection}(\mathcal{O},\msg,s,\mathcal{R})$ 
    \STATE $T \leftarrow T \,||\, t; \quad h \leftarrow h\, || \, t $ 
    \STATE $s \leftarrow \textsc{removePrefix}(s,p)$
\ENDWHILE
\RETURN $T$
\end{algorithmic}
\end{algorithm}

\noindent \textbf{Decoding Procedure.}
The decoding process is likewise executed in a block-wise manner and follows the minimum-distance decoding principle. The detailed procedure is given in Algorithm~\ref{alg:decode}. The system processes each received block sequentially and reproduces the same codebook construction procedure as in the encoding stage, thereby obtaining the set of candidate sequences (the codebook) corresponding to the current block. It then computes the distances between the received sequence and all sequences in the codebook, and selects the group containing the closest sequence as the decoding result. 

Owing to the use of a PRG for synchronizing sequence selection, repeating the same procedure ensures \emph{consistency of the historical context} between the encoder and decoder. To address sequence misalignment caused by insertions and deletions, the decoder introduces a sliding window of size $w$, enlarging the comparison range to mitigate the impact of such shifts. Meanwhile, within each block, the algorithm dynamically estimates the offset $\Delta^*$ based on the match between the decoding result and the received sequence, so as to \emph{minimize the influence of offset errors} on subsequent decoding and to reduce error propagation. This process is iteratively applied until the entire received sequence has been fully decoded.

\begin{algorithm}[!t]
\caption{Decode}
\label{alg:decode}
\renewcommand{\algorithmicrequire}{\textbf{Input:}}
\renewcommand{\algorithmicensure}{\textbf{Output:}}
\begin{algorithmic}[1]
\REQUIRE Model $\G$, history $h$, received token sequence $\hat{T}$, sample size $k$, distance constraint $d_\mathcal{T}$, block length $n_l$, windows size $w$, seed $\mathcal{S}$.
\ENSURE Reconstructed bits $\hat{s}$

\STATE $\hat{s} \leftarrow \emptyset$, $pos \leftarrow 0$
\STATE $\mathcal{R} \leftarrow \textsc{PRG.Init($\mathcal{S}$)}$
\WHILE{$len(\hat{T}) > pos$}
    \STATE $\B \leftarrow \textsc{CodebookConstruction}(\G,h,k,n_l,\mathcal{R})$ 
    \STATE $\mathcal{O} \leftarrow \textsc{DistanceConstraintGrouping}(\B,d_\mathcal{T})$ 
    \STATE $\msg \leftarrow \textsc{AdaptiveCoding}(\mathcal{O})$ 
    \STATE $\tilde{t} \leftarrow \hat{T}[pos:pos+nc+w]$ 
    \STATE $t \leftarrow\arg\min_{c \in \B} \dist\bigl(\tilde{t}, c\bigr)$
    \STATE $\Delta^\star \leftarrow \textsc{CalOffset($t,\tilde{t}$)}$
    \STATE $O \leftarrow \textsc{{Find($t,\mathcal{O}$)}}$
    \STATE $t^* \leftarrow \textsc{InvCDF}(O^\star,\mathcal{R}(\cdot))$
    \STATE $p \leftarrow \msg[O^\star]$
    \STATE $\hat{s} \leftarrow \hat{s} \,||\, p; \quad h \leftarrow h \,||\, t^*$
    \STATE $pos \leftarrow pos + n_l + \Delta^\star$
\ENDWHILE
\RETURN  $\hat{s}$
\end{algorithmic}
\end{algorithm}

\noindent \textbf{Batch Processing.}
The $k$ candidate sequences required for codebook construction are inherently amenable to parallel generation rather than strictly sequential execution. Taking advantage of the batch inference capability commonly supported by large language models~\cite{kwon2023efficientmemorymanagementlarge}, multiple distinct historical contexts $h$ can be packed into a single batch and \emph{processed in parallel} during model inference, thereby substantially improving the efficiency of codebook generation. 

In addition, since identical model states may recur during the generation process, the system incorporates a \emph{caching mechanism} that directly reuses previously computed probability distributions. This avoids redundant accesses to the generative model and further reduces the overall computational overhead.

\subsection{Security}

\begin{theorem}
\label{thm:computational-security}
Let $\Pi$ be a steganographic scheme in which the encoder $\Enc$ uses distance-constrained encoding and the decoder $\Dec$ employs minimum distance decoding, and if the encoding parameters are derived from a shared secret key $\mathsf{sk}$ via a cryptographically secure pseudorandom generator $\mathsf{PRG}$, then the system $\Pi$ is \textbf{computationally secure} provided that $\mathsf{PRG}$ is computationally indistinguishable from a uniform random source under security parameter $\lambda$.
\end{theorem}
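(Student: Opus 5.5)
The argument is a standard hybrid/reduction that combines Theorem~\ref{thm:distance-constraint-security} with the PRG definition. I will consider two experiments.

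\textbf{Ideal experiment.} The encoding parameters $\xi=(\zeta_1,\dots,\zeta_k,\eta)$ used by every block are derived from a truly uniform string $\mathcal{S}\gets\{0,1\}^{\ell(\lambda)}$. The values $r_{i,t}$ feeding \textsc{InvCDF} in Algorithm~\ref{alg:cbc} and the intra-group selection value in Algorithm~\ref{alg:ss} are then independent and uniform.

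\textbf{Real experiment.} These values are derived from $G(\mathsf{sk})$ with $\mathsf{sk}\gets\{0,1\}^\lambda$.

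The advantage $\Adv_{\A,\Pi}(\lambda)$ is bounded by the triangle inequality. It is at most the distance between the real stego distribution and the ideal stego distribution, plus the distance between the ideal stego distribution and $\ptheta$.

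\textbf{Step 1: the ideal stego distribution equals $\ptheta$.} I first show that in the ideal experiment the output is exactly distributed as $\ptheta$. Within a single block, the codewords $\x^{(i)}$ are generated by inverse transform sampling driven by fresh uniform randomness. Conditioned on the history, they are therefore i.i.d.\ samples of the next $n_l$ tokens, so Definition~\ref{def:generate-model} holds with $\zeta_i$ uniform.

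Adaptive coding (Algorithm~\ref{alg:ac}) maps the uniformly random leaf determined by the message bits to the group $\mathcal{O}_m$ with probability $|\mathcal{O}_m|/k$. This requires the secret bits to be uniform, e.g.\ pseudorandomly encrypted. That is exactly the hypothesis $\pmsg^\xi(m)=|\mathcal{O}^\xi_m|/k$ of Theorem~\ref{thm:distance-constraint-security}.

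Applying Theorem~\ref{thm:distance-constraint-security} blockwise, each output block is distributed as $\ptheta(\cdot\mid h)$. Chaining blocks via the autoregressive factorization, with fresh independent randomness for each block, gives $\ppi=\ptheta$ over the whole sequence. The ideal-to-$\ptheta$ term is therefore zero.

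\textbf{Step 2: reduction from the real experiment to PRG security.} Given a distinguisher $\A$ for $\Pi$, I construct $\mathcal{B}$ against $G$. On input a string $\mathcal{S}$, $\mathcal{B}$ samples a message, runs Algorithm~\ref{alg:encode} using $\mathcal{S}$ as the randomness source, feeds the stegotext to $\A$, and outputs $\A$'s bit. $\mathcal{B}$ is PPT because each encoding step is polynomial in $\lambda$ with $k,n_l$ polynomial: model queries, $O(k^2)$ edit distance computations, DSU, and tree allocation.

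If $\mathcal{S}=G(\mathsf{sk})$, $\mathcal{B}$ simulates the real experiment; if $\mathcal{S}$ is uniform, it simulates the ideal one. Hence the real-to-ideal distance equals $\Adv_{\mathcal{B},G}(\lambda)\le\negl(\lambda)$, which proves computational security.

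\textbf{Main obstacle.} The delicate step is Step 1's independence claim. The grouping and the message-consumption length $|p|$ are data dependent, and the same PRG stream feeds both codebook construction and selection. I must therefore fix a deterministic indexing of the PRG output, e.g.\ $\mathcal{R}(i,t)$ and per-block counters, so that in the ideal world all values are distinct, independent uniform draws regardless of the variable message consumption.

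A second subtlety is that \textsc{InvCDF} must be exact. This means $r_t$ is effectively a real in $[0,1)$, and finite-precision truncation contributes only a negligible statistical error, which I would absorb into $\negl(\lambda)$. I also need the message bits to be uniform, and would argue this by encrypting them with PRG output.
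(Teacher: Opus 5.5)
Your top-level argument is the paper's: a hybrid from $G(\mathsf{sk})$ to uniform parameters, perfect security of the ideal hybrid via Theorem~\ref{thm:distance-constraint-security}, and the triangle inequality. Your explicit reduction $\mathcal{B}$ is fine. \textbf{The gap is in Step~1}, where you go beyond the paper and chain blocks using ``fresh independent randomness for each block''. That is false for the message-derived leaf.

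In Algorithm~\ref{alg:ss} the leaf is $b=\textsc{Head}(s)$, which has $\log_2 k$ bits, but only the group's code $p$ is removed from $s$. Suppose the selected group has $|p|<\log_2 k$ and does not fill the subtree below $p$. Then the unconsumed bits are conditioned on $b$ having landed in that group's leaves, and they are reused to form the next block's leaf. For example, take $k=4$. A group of size $3$ receives the empty code and three fixed leaves. If it is selected, nothing is consumed, and the next leaf is uniform on those same three leaves. In the next block, a size-$3$ group, which the deterministic Algorithm~\ref{alg:ac} allocates identically, is then chosen with probability $1$ instead of $3/4$.

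So from the second block on, the hypothesis $\pmsg^\xi(m)=|\mathcal{O}^\xi_m|/k$ of Theorem~\ref{thm:distance-constraint-security} fails conditionally on the earlier outputs, even when $s$ is uniform. For generic models the joint law of consecutive blocks then deviates from $\ptheta$ by a constant independent of $\lambda$, which cannot be absorbed into $\negl(\lambda)$. Neither remedy in your ``main obstacle'' paragraph addresses this reuse of inspected but unconsumed bits: deterministic PRG indexing does not, and neither does a one-time encryption of $s$.

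\textbf{The missing idea is to re-randomize the leaf in every block}, as Meteor does. Set $b_j=\textsc{Head}(s)\oplus R_j$ with a fresh PRG block $R_j$. All leaves of the selected group lie below its code $p$. The decoder knows $R_j$ and has identified the group, so it recovers the consumed plaintext bits as $p$ XORed with the first $|p|$ bits of $R_j$. The sender removes exactly those $|p|$ bits. In the ideal hybrid, $b_j$ is then uniform and independent of the previous blocks and of block $j$'s codebook randomness, whatever the message. Theorem~\ref{thm:distance-constraint-security} therefore applies conditionally on the history, and your autoregressive chaining goes through. Encrypting $s$ becomes unnecessary, and the masks $R_j$ are simply part of the string handed to $\mathcal{B}$.

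The paper's own proof does not address this point. It applies Theorem~\ref{thm:distance-constraint-security} once, to a single $\xi=(\zeta_1,\dots,\zeta_k,\eta)$, and takes $\pmsg^\xi(m)=|\mathcal{O}^\xi_m|/k$ as given. At that level of abstraction your Step~1 collapses to the paper's argument and is fine. For the block-wise Alkaid encoder you chose to analyse, however, you need the per-block mask, or an explicit assumption that each block's leaf is fresh and uniform.
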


The theorem above establishes that, when instantiated with a cryptographically secure PRG, the steganographic text generated by Alkaid is \emph{computationally indistinguishable} from the model's original random sampling output. The full proof is given in Appendix~\ref{appendix:computational-security}. The actual steganographic test results are in Appendix~\ref{appendix:testresult}.

\section{Evaluation}
\subsection{Experimental Setup}

\noindent \textbf{Generative Model Setup.} We focus on mainstream large language models (LLMs) commonly used today. Our experiments are primarily conducted on Qwen2.5-7B~\cite{qwen2025qwen25technicalreport}, LLaMA-3-8B~\cite{grattafiori2024llama3herdmodels}, GLM-4.5-9B~\cite{5team2025glm45}, and Mistral-7B~\cite{jiang2023mistral7b}, with Qwen2.5-7B serving as the default model for baseline evaluations. During generation, we employ nucleus (top-p) sampling~\cite{holtzman2019curious} at each step, with a default $p=0.95$. Specifically, as STEAD~\cite{qistead} is only applicable to diffusion large models, the model we equipped for it is Dream-7B~\cite{ye2025dream}.

\noindent \textbf{Hardware.} All of the experiments are conducted on a server with 25 vCPU Intel(R) Xeon(R) Platinum 8470Q, and a RTX 5090(32GB).

\noindent \textbf{Metrics and evaluation methods.} 
We evaluate the proposed method along four key dimensions: 

\noindent \textbf{\emph{(i)}} \textit{Robustness}: Measured by the decoding success rate. Only cases where the decoded message exactly matches the original are considered successful. We also report \(d_{\min}\), the minimum edit distance between the selected codeword and all other codewords in the alternative group of the codebook. A larger \(d_{\min}\) indicates greater error tolerance.

\noindent \textbf{\emph{(ii)}} \textit{Efficiency}: Measured by encoding and decoding throughput, reported in bits per second (bit/s) and tokens per second (token/s). Higher values indicate better efficiency.

\noindent \textbf{\emph{(iii)}} \textit{Capacity}: Measured by payload (bit/token) and entropy utilization. Higher values correspond to larger steganographic capacity.

\noindent \textbf{\emph{(iv)}} \textit{Usability}: Measured primarily by perplexity (PPL). Lower PPL signifies higher linguistic quality of the generated stegotext. Additionally, we measure max GPU memory consumption and demonstrate the system’s practical applicability by deploying it across multiple languages and models in diverse real-world scenarios. Furthermore, the use of different generative models is explored in Appendix~\ref{appendix:attempt}.

\begin{figure*}[!t]%
\centering
\includegraphics[width=0.95\linewidth]{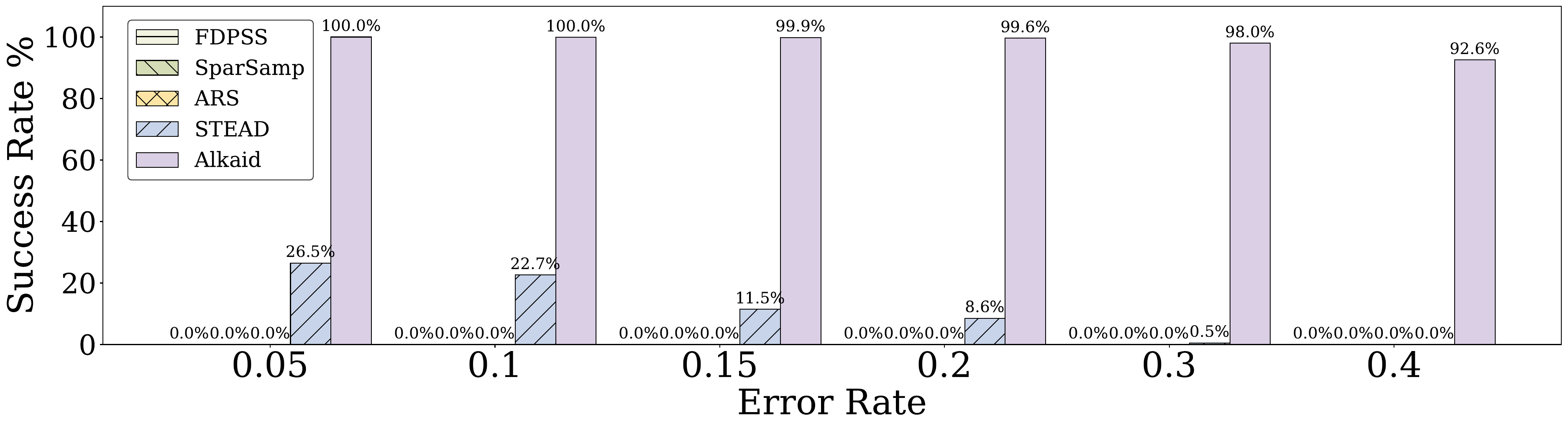}
    \caption{Decoding success rate comparison at different error rates on edit error channel $\Ce$.}
    \label{fig.exp1error}
\end{figure*}

\begin{figure*}[!t]%
\centering
\includegraphics[width=0.95\linewidth]{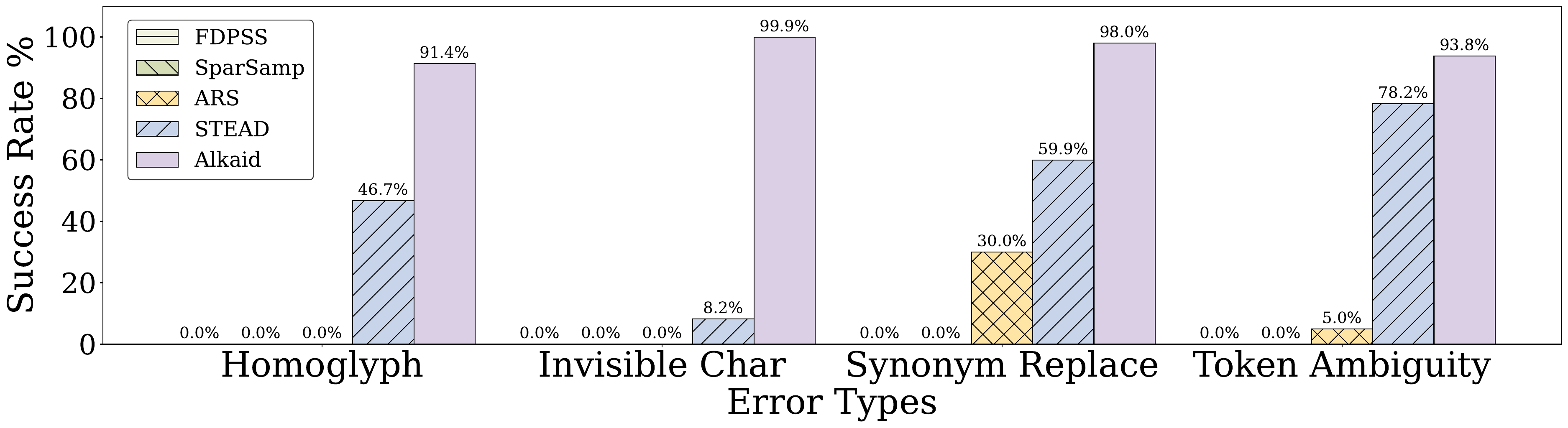}
    \caption{Decoding success rate comparison across different token-level errors.}
    \label{fig.exp1attack}
\end{figure*}

\subsection{Main results}

\noindent \textbf{Alkaid achieves resilience to edit errors}. By default, we assume that the system will encounter an error rate of 0.15 for the erroneous transmission channels, and we adopt a block size of $n_l = 20$. Therefore, the threshold $d_\mathcal{T} =0.15 * 20 = 6$. Additionally, we choose the sample size $k = 32$ to balance capacity and efficiency. For each error rate, 10,000 experiments were performed on an edit error channel.

As shown in the Fig.~\ref{fig.exp1error}, Alkaid consistently outperforms all other methods across random edit error rates from 0.05 to 0.4. It maintains a success rate above 99.6\% up to an error rate of 0.2, decreases only slightly to 98.0\% at 0.3, and remains as high as 92.6\% even at the maximum tested error rate of 0.4. In contrast, the state of the art (SOTA) methods FDPSS~\cite{liao2025framework}, SparSamp~\cite{wang2025sparsamp}, ARS~\cite{bai2025provably} degrade rapidly, with their success rates dropping to nearly zero. While STEAD~\cite{qistead} is robust to edit errors, it only achieves a success rate of 26.5\% at an error rate of 0.05, and its accuracy drops significantly as the error rate rises.  These results demonstrate that Alkaid is highly robust to random edit errors compared to existing methods.

\noindent \textbf{Alkaid achieves resilience to token-level errors}. Similar to previous studies~\cite{qu2025provably,kirchenbauer2023watermark}, we further conducted experiments on four typical token-level errors in the existing literature, with their core operations as follows: \textbf{\emph{(i)}} \emph{Homoglyph}, which replaces characters with visually similar glyphs; \textbf{\emph{(ii)}} \emph{Invisible Char}, which inserts imperceptible special characters into the text; \textbf{\emph{(iii)}} \emph{Synonym Replace}, which substitutes original words with their synonyms; \textbf{\emph{(iv)}} \emph{Token Ambiguity}, which misleads the tokenizer into decoding text into incorrect tokens. In all our experiments, each errors was set to perturb no more than 10\% of the tokens in the target text. 

As shown in Fig.~\ref{fig.exp1attack}, Alkaid demonstrates consistently high robustness across all four types of token-level errors, achieving success rates of 91.4\% under homoglyph errors, 99.9\% under invisible character errors, 98.0\% under synonym replacement errors, and 93.8\% under token ambiguity errors. By contrast, FDPSS and SparSamp perform poorly with a complete failure (0\% success rate). ARS also collapses under nearly all conditions, yielding success rates at or near 0\%; it only attains marginal performance under synonym replacement errors, due to its intrinsic resilience to lexical substitutions. STEAD exhibits moderate resilience, with success rates of 59.9\% under synonym replacement errors and 78.2\% under token ambiguity errors, yet it still lags far behind Alkaid across all scenarios. These results confirm that Alkaid is substantially more robust to diverse token-level perturbations than all state-of-the-art methods.

\begin{table*}[!t]
\renewcommand{\arraystretch}{1.2}
\caption{Comparison of efficiency and capacity of provably secure robust steganography.}
\label{tab:capacityoverheadcom}
\begin{tabularx}{\textwidth}{ Z|ZZZZZZZ}
\toprule
\makecell{\multirow{3}{*}{\textbf{Methods}}} &
\multicolumn{3}{c}{\textbf{Efficiency}} &
\multicolumn{2}{c}{\textbf{Capacity}} &
\multicolumn{2}{c}{\textbf{Usability}} \\
\cmidrule(lr){2-4} \cmidrule(lr){5-6} \cmidrule(l){7-8}
 & \makecell{Encoding$\uparrow$ \\(Token/s)} & \makecell{Encoding$\uparrow$ \\(Bit/s)} & \makecell{Decoding$\uparrow$ \\(Bit/s)} & \makecell{Payload$\uparrow$ \\(Bit/Token)} & \makecell{Entropy$\uparrow$ \\ Utilization\%} & Entropy & PPL$\downarrow$ \\
\midrule
STEAD   & \textbf{71.5330}            & 2.6385           & 0.4889           & 0.0369        & 0.72\%              & 5.0632      & 40.1805 \\
ARS     & 18.6490            & 0.2570           & \textbf{106.087}         & 0.0136        & 0.71\%              & 1.6500      & 1.1260  \\
Alkaid  & 33.0203            & \textbf{6.7236}           & 6.7011           & \textbf{0.2045}        & \textbf{18.47\%}              & 1.1493      & 2.4458  \\
\bottomrule
\end{tabularx}
\end{table*}

\begin{table*}[!t]
\centering
\renewcommand{\arraystretch}{1.2}
\caption{Comprehensive evaluation of Alkaid across multiple models and sampling configurations.}
\label{tab:performance_comparison}
\begin{tabularx}{\textwidth}{P{1.45cm} | Z | ZZ ZZZ P{1.2cm}P{1.2cm} ZP{1.1cm}Z}
\toprule
\small\makecell{\multirow{3}{*}{\textbf{Models}}} &
\small\makecell{\multirow{3}{*}{\textbf{Top-p}}} &
\multicolumn{2}{c}{\textbf{Robustness}} &
\multicolumn{3}{c}{\textbf{Efficiency}} &
\multicolumn{2}{c}{\textbf{Capacity}} &
\multicolumn{3}{c}{\textbf{Usability}} \\
\cmidrule(lr){3-4} \cmidrule(lr){5-7} \cmidrule(lr){8-9} \cmidrule(l){10-12}
 &  & 
\makecell{$d_{\min}$$\uparrow$} & \small\makecell{Succ.$\uparrow$ \\Rate\%} &
\small\makecell{Enc.$\uparrow$ \\(T./s)} & \small\makecell{Enc.$\uparrow$ \\ (B./s)} & \small\makecell{Dec.$\uparrow$ \\(B./s)} &
\small\makecell{Payload$\uparrow$ \\ (B./T.)} & \small\makecell{Entropy$\uparrow$\\ Util.\%} &
\small\makecell{GPU$\downarrow$ \\(GB)} & \small\makecell{Entropy} & \small\makecell{PPL$\downarrow$} \\
\midrule
\multirow{4}{*}{\textsc{Qwen}} 
& 0.8   & 10.547 & 99.9\% & 35.212 & 6.086 & 6.065 & 0.174 & 21.0\% & 16.468 & 0.881 & 1.973 \\
& 0.9   & 11.312 & 100\%  & 34.000 & 6.474 & 6.452 & 0.192 & 18.0\% & 16.468 & 1.145 & 2.495 \\
& 0.95  & 11.815 & 100\%  & 33.280 & 6.591 & 6.570 & 0.199 & 16.0\% & 16.468 & 1.357 & 3.101 \\
& 1.0   & 12.428 & 100\%  & 18.675 & 3.857 & 3.849 & 0.207 & 13.3\% & 16.468 & 1.703 & 4.898 \\
\midrule
\multirow{4}{*}{\textsc{Mistral}} 
& 0.8   & 9.734  & 99.9\% & 35.136 & 4.122 & 4.108 & 0.119 & 22.2\% & 25.101 & 0.571 & 1.632 \\
& 0.9   & 10.151 & 100\%  & 33.660 & 4.865 & 4.849 & 0.147 & 20.9\% & 25.101 & 0.769 & 2.048 \\
& 0.95  & 10.442 & 100\%  & 33.021 & 5.057 & 5.041 & 0.155 & 19.3\% & 25.101 & 0.888 & 2.538 \\
& 1.0   & 10.943 & 100\%  & 29.432 & 4.995 & 4.981 & 0.171 & 15.9\% & 25.101 & 1.226 & 5.539 \\
\midrule
\multirow{4}{*}{\textsc{Llama}}  
& 0.8   & 9.109  & 99.9\% & 33.182 & 3.464 & 3.452 & 0.105 & 21.9\% & 25.101 & 0.474 & 1.421 \\
& 0.9   & 9.631  & 99.9\% & 32.004 & 4.189 & 4.176 & 0.132 & 21.1\% & 25.101 & 0.626 & 1.594 \\
& 0.95  & 9.984  & 100\%  & 31.183 & 4.593 & 4.579 & 0.149 & 20.0\% & 25.101 & 0.758 & 1.812 \\
& 1.0   & 10.221 & 100\%  & 20.307 & 3.348 & 3.341 & 0.167 & 17.8\% & 25.101 & 0.969 & 2.456 \\
\midrule
\multirow{4}{*}{\textsc{Glm}}    
& 0.8   & 12.476 & 100\%  & 23.417 & 4.799 & 4.788 & 0.204 & 12.5\% & 25.101 & 1.742 & 4.666 \\
& 0.9   & 13.587 & 100\%  & 22.972 & 4.933 & 4.922 & 0.214 & 9.7\%  & 25.101 & 2.408 & 9.335 \\
& 0.95  & 14.368 & 100\%  & 22.332 & 4.932 & 4.921 & 0.221 & 8.0\%  & 25.101 & 3.009 & 20.556 \\
& 1.0   & 15.037 & 100\%  & 14.359 & 3.186 & 3.181 & 0.222 & 6.1\%  & 25.101 & 3.922 & 66.441 \\
\bottomrule
\end{tabularx}
\end{table*}

\noindent \textbf{Alkaid outperforms other robust steganography in terms of efficiency and steganographic capacity.} The results are shown in Table~\ref{tab:capacityoverheadcom}.
Although STEAD achieves a leading token encoding speed of 71.533 tokens per second due to its use of the Dream-7B diffusion model architecture, its actual steganographic encoding efficiency is only 2.6385 bits per second, with a payload capacity as low as 0.0369 bits per token and an entropy utilization rate of less than 1\%, resulting in limited overall steganographic capability. ARS is designed with an efficient decoding process, achieving a decoding speed of 106.087 bits per second, but its encoding efficiency is merely 0.257 bits per second, and its payload capacity is 0.0136 bits per token, similarly suffering from low capacity utilization. In contrast, Alkaid achieves an encoding efficiency of 6.7236 bits per second, far surpassing the other methods. Its payload capacity reaches 0.2045 bits per token, with an entropy utilization rate of 18.47\%, indicating its ability to conceal more information per unit time and significantly enhance the utilization efficiency of text entropy. In terms of usability, all methods maintain reasonable text perplexity, ensuring that model functionality remains unaffected. In summary, Alkaid outperforms the other methods in both encoding efficiency and steganographic capacity while maintaining good usability, leading to the conclusion that it achieves dual leadership in efficiency and capacity for robust steganography.

\noindent \textbf{Alkaid demonstrates high usability across diverse configurations.}
As shown in Table~\ref{tab:performance_comparison}, Alkaid exhibits consistent and adaptable performance across diverse sampling configurations, as evidenced by the comprehensive performance data of four typical large language models: Qwen, Mistral, Llama and GLM. By default, we set the configuration parameters as 
$(d_\mathcal{T},k,n_l) = (6,32,20)$. For the robustness test, we configured the edit error channel  with an error rate of 0.15. Across all models and Top-p settings from 0.8 to 1.0, the system maintains robust encoding speeds in terms of bits per second, with peaks such as 6.723 bits per second for Alkaid in earlier comparisons, and strong payload capacities ranging up to 0.222 bits per token in configurations like Glm with Top-p at 1.0. More importantly, success rates for message recovery reach 100\% in nearly all scenarios, supported by high minimum distance values generally above 10 in many settings, reflecting strong resilience against edit errors.

A clear pattern emerges across configurations: as Top-p increases, payload capacity tends to rise slightly, while entropy utilization generally decreases. For instance, in the Qwen model, payload increases from 0.174 to 0.207 bits per token as Top-p goes from 0.8 to 1.0, yet entropy utilization drops from 21.0\% to 13.3\%. Though the capacity experiences a reduction, this is indicative of enhanced robustness, which is reflected in the rise of $d_{\min}$ from 10.547 to 12.428. Encoding speed in tokens per second typically remains stable until Top-p approaches 1.0, where a noticeable decline occurs, as seen with Qwen dropping from over 33 tokens per second to 18.675. Text naturalness, measured by perplexity, remains within practical limits under most settings, staying below 5.0 for Top-p values up to 0.95 across models, though it can rise significantly at Top-p 1.0, particularly with certain models like Glm.
In summary, Alkaid exhibits high practical usability and reliable adaptability, balancing encoding speed, steganographic capacity, and text quality effectively across diverse models and sampling parameters. Its consistent performance in success rate and robustness, coupled with stable efficiency and capacity trends under varying conditions, confirms its suitability for a wide range of application scenarios.

\begin{figure*}[!t]

    \centering
    \subfloat[ Edit Distance Distribution. ($d_{min} > d_{\mathcal{T}}$, lower limit constraint)]{
        \label{fig.diffparameter_dmin}
        \includegraphics[width=0.47\linewidth]{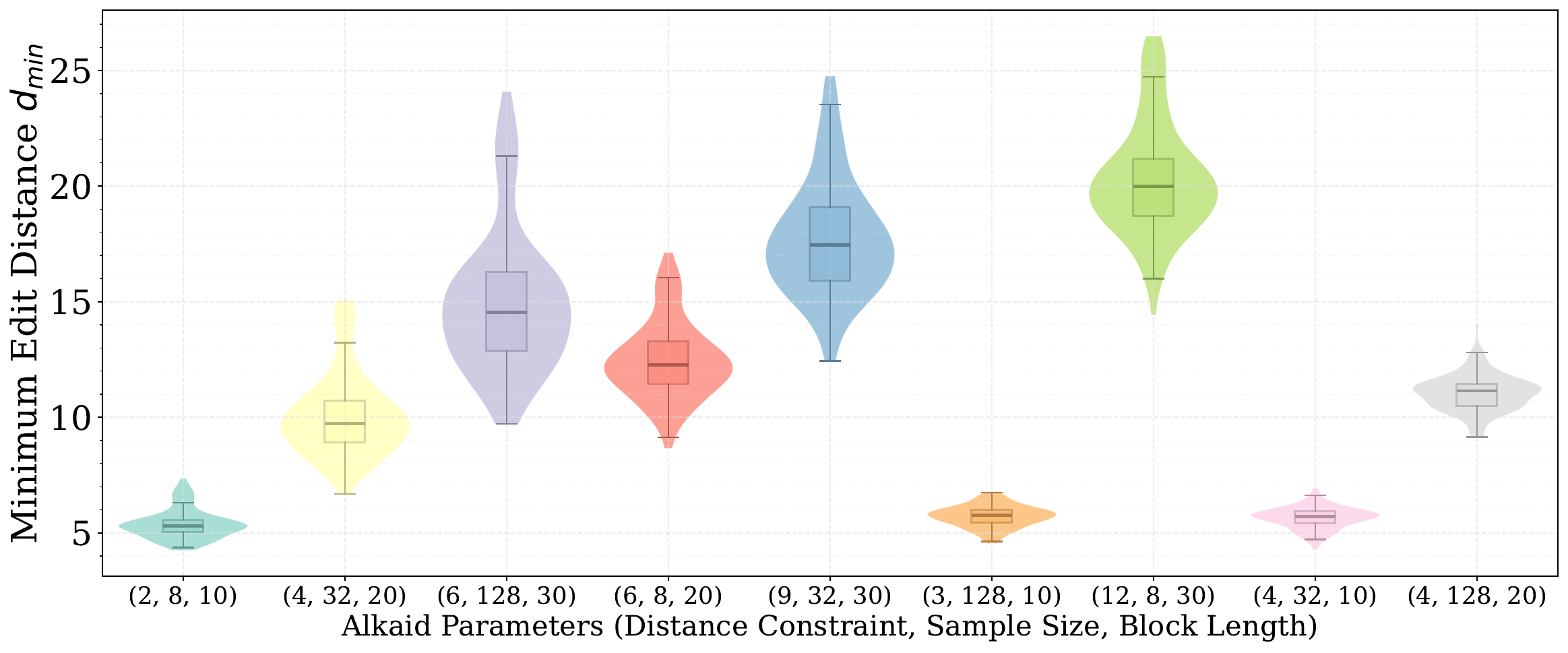}
        }
   \subfloat[ Capacity Distribution (bits). ($C\le \log_2k$, upper limit constraint)]{
        \label{fig.diffparameter_bits}
        \includegraphics[width=0.47\linewidth]{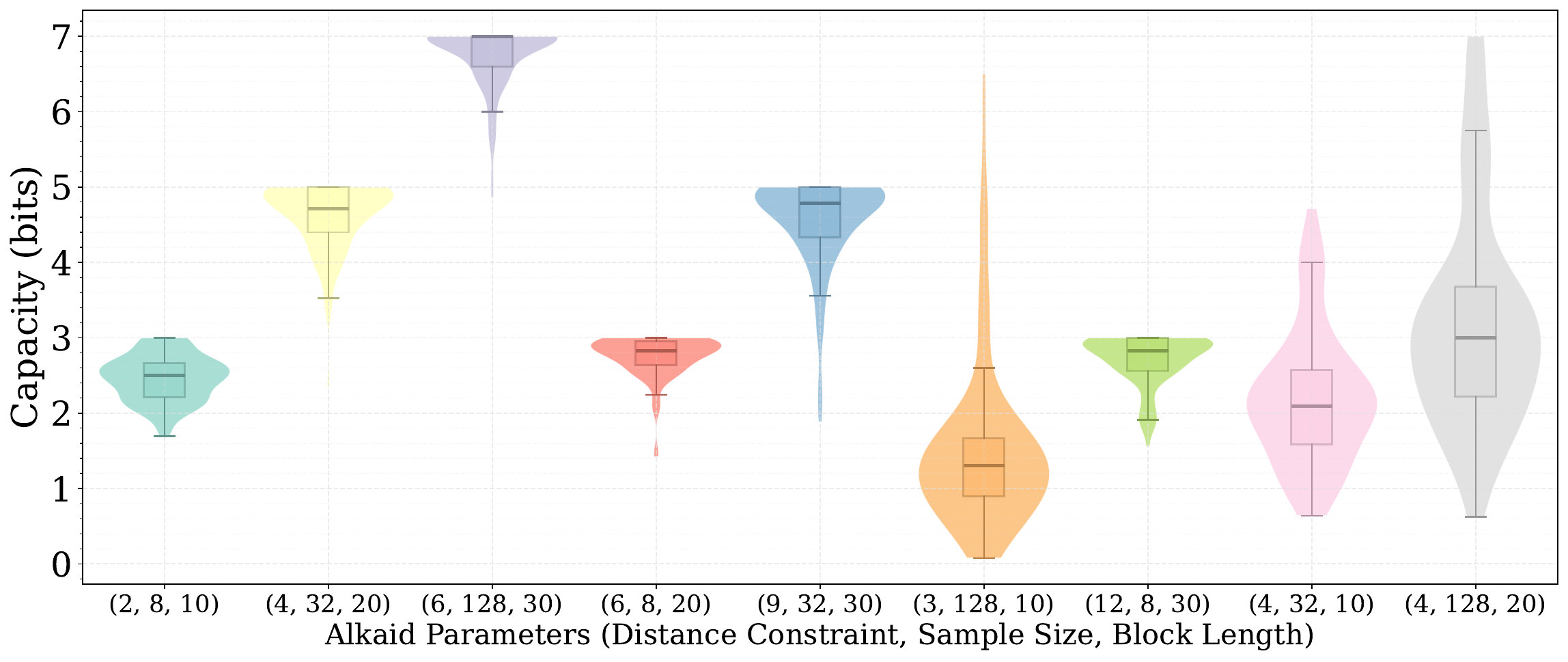}
        }
\caption{Trade-off between robustness and capacity under different configuration parameters ($d_\mathcal{T}$, $k$, $n_l$).}
\label{fig.diffparameter}
\end{figure*}

\begin{table*}[!t]
\centering
\renewcommand{\arraystretch}{1.2}
\caption{Performance of Alkaid under different configuration parameters ($d_\mathcal{T}$, $k$, $n_l$).}
\label{tab:config_performance}
\begin{tabularx}{\textwidth}{@{} c c c | ZZZ Z Z ZZ @{}}
\toprule
\multicolumn{3}{c|}{\textbf{Parameters}} &
\multicolumn{3}{c}{\textbf{Efficiency}} &
\multicolumn{2}{c}{\textbf{Capacity}} &
\multicolumn{2}{c}{\textbf{Usability}} \\
\cmidrule(r){1-3} \cmidrule(lr){4-6} \cmidrule(lr){7-8} \cmidrule(l){9-10}
$d_\mathcal{T}$ & $k$ & $n_l$ &
\makecell{Encoding$\uparrow$ \\ (Token/s)} &
\makecell{Encoding$\uparrow$ \\ (Bit/s)} &
\makecell{Decoding$\uparrow$ \\ (Bit/s)} &
\makecell{Payload$\uparrow$ \\ (Bit/Token)} &
\makecell{Entropy$\uparrow$ \\ Utilization\%} &
Entropy & PPL$\downarrow$ \\
\midrule
2   & 8   & 10  & 49.3844 & 11.9794 & 11.9696 & 0.2427 & 22.29\% & 1.1353 & 2.4600 \\
4   & 32  & 20  & 32.8736 & 7.3679  & 7.3433  & 0.2249 & 20.37\% & 1.1577 & 2.4686 \\
6   & 128 & 30  & 10.1460 & 2.1205  & 2.1085  & 0.2095 & 19.11\% & 1.1686 & 2.4337 \\
6   & 8   & 20  & 47.6695 & 6.4399  & 6.4320  & 0.1351 & 12.04\% & 1.1728 & 2.4893 \\
9   & 32  & 30  & 30.3108 & 4.4874  & 4.4686  & 0.1483 & 13.25\% & 1.1592 & 2.3825 \\
3   & 128 & 10  & 17.4250 & 2.1644  & 2.1550  & 0.1311 & 11.18\% & 1.0837 & 2.5654 \\
12  & 8   & 30  & 46.9241 & 4.1630  & 4.1562  & 0.0887 & 7.83\%  & 1.1882 & 2.5934 \\
4   & 32  & 10  & 36.7648 & 7.7365  & 7.7181  & 0.2125 & 18.30\% & 1.1294 & 2.4358 \\
4   & 128 & 20  & 11.5783 & 1.6014  & 1.5942  & 0.1411 & 12.68\% & 1.0793 & 2.3259 \\
\bottomrule
\end{tabularx}
\end{table*}

\subsection{Deep Dive}
\noindent \textbf{Different parameter variations of Alkaid exert distinct impacts on system performance.} To further explore the impact of each parameter of Alkaid on the experimental results, we conducted controlled experiments with different configuration parameters, and the relevant results are shown in Fig.~\ref{fig.diffparameter} and Table~\ref{tab:config_performance} respectively.

\begin{table*}[!t]
\centering
\renewcommand{\arraystretch}{1.2}
\caption{The impact of optimization strategies on the performance of Alkaid.}
\label{tab:optimization_strategies}
\begin{tabularx}{\textwidth}{ Z|ZZZZZZZ}
\toprule
\makecell{\multirow{3}{*}{\textbf{Methods}}} &
\multicolumn{4}{c}{\textbf{Efficiency}} &
\multicolumn{3}{c}{\textbf{Usability}} \\
\cmidrule(lr){2-5}  \cmidrule(l){6-8}
 & \makecell{Encoding$\uparrow$ \\(Token/s)} & \makecell{Encoding$\uparrow$ \\(Bit/s)} & \makecell{Decoding$\uparrow$ \\(Bit/s)} & \makecell{Density$\uparrow$ \\\small{(Token/s/GB)}} & \makecell{Max GPU$\downarrow$ \\ (GB)} & Entropy & PPL$\downarrow$ \\
\midrule
None           & 1.9892  & 0.3978 & 0.3977 & 0.1385 & 14.5295 & 1.1543 & 2.4762 \\
Caching        & 2.7197  & 0.5502 & 0.5500 & 0.1894 & 14.4682 & 1.1632 & 2.4185 \\
Batching       & 33.0656 & 6.7329 & 6.7104 & 2.0434 & 17.7603 & 1.1493 & 2.4458 \\
All            & 33.0280 & 6.7253 & 6.7026 & 2.0411 & 17.7603 & 1.1493 & 2.4458 \\
\bottomrule
\end{tabularx}
\end{table*}

\begin{table*}[!t]
\centering
\renewcommand{\arraystretch}{1.2}
\caption{Performance of Alkaid under different batch sizes.}
\label{tab:batch_performance}
\begin{tabularx}{\textwidth}{ Z|ZZZZZZZ}
\toprule
\makecell{\multirow{3}{*}{\textbf{Batch Sizes}}} &
\multicolumn{4}{c}{\textbf{Efficiency}} &
\multicolumn{3}{c}{\textbf{Usability}} \\
\cmidrule(lr){2-5}  \cmidrule(l){6-8}
 & \makecell{Encoding$\uparrow$ \\(Token/s)} & \makecell{Encoding$\uparrow$ \\(Bit/s)} & \makecell{Decoding$\uparrow$ \\(Bit/s)} & \makecell{Density$\uparrow$ \\\small{(Token/s/GB)}} & \makecell{Max GPU$\downarrow$ \\ (GB)} & Entropy & PPL$\downarrow$ \\
\midrule
16   & 17.0894 & 3.4934 & 3.4873 & 1.0815 & 16.7625  &  1.1712 & 2.4693 \\
32   & 24.3070 & 4.9088 & 4.8966 & 1.5098 & 18.0251  &  1.1745 & 2.5060 \\
64   & 32.5706 & 6.6482 & 6.6263 & 2.0131 & 17.6177 &  1.1507 & 2.4478 \\
128  & 33.0286 & 6.7249 & 6.7025 & 2.0411 & 17.7603 &  1.1493 & 2.4458 \\
144  & 33.0268 & 6.7254 & 6.7027 & 2.0410 & 17.7603  &  1.1493 & 2.4458 \\
\bottomrule
\end{tabularx}
\end{table*}

\emph{The distance constraint \(d_\mathcal{T}\) serves as the core variable governing the security and robustness of the system.} Data from Figure 1 clearly demonstrate that an increase in \(d_\mathcal{T}\) directly and significantly elevates the minimum edit distance, denoted as \(d_{\min}\), between the selected codeword and candidate codewords from other groups within the codebook. For instance, under configurations with roughly comparable values for the sample size \(k\) and block length \(n_l\), raising \(d_\mathcal{T}\) from 2, corresponding to configuration (2,8,10), to 12, corresponding to configuration (12,8,30), causes \(d_{\min}\) to surge sharply from 5.5 to 21.0. In another comparative case, with \(k\) fixed at 32 and \(n_l\) at 20, increasing \(d_\mathcal{T}\) from 4 to 6, that is, from configuration (4,32,20) to (6,8,20), leads to a jump in \(d_{\min}\) from 9.0 to 18.0. This reaffirms that this parameter establishes a quantifiable, monotonically increasing safety lower bound for the required textual divergence in the output. However, the comprehensive performance data in Table 3 reveal that this enhancement in robustness is accompanied by significant and consistent system overhead. In the first comparative case mentioned, the encoding efficiency measured in bits per second plummets from 11.9794 to 4.1630, while the effective payload measured in bits per token contracts substantially from 0.2427 to 0.0887. In the second comparative case, the encoding efficiency correspondingly decreases from 7.3679 to 6.4399 bits per second, and the payload drops from 0.2249 to 0.1351 bits per token. Consequently, \(d_\mathcal{T}\) represents the starting point for performance trade-offs: increasing it directly strengthens security but inevitably does so at the cost of reduced encoding speed and diminished steganographic capacity.

\emph{The sample size \(k\) acts as the core variable for regulating the system's capacity.} The capacity distributions illustrated in Figure 2 indicate that increasing \(k\) aims to expand the candidate sampling space, thereby theoretically exploring a higher upper limit for information embedding. Taking the configuration with fixed parameters \(d_\mathcal{T}=4\) and \(n_l=20\) as an example, when \(k\) is increased from 32 to 128, the maximum capacity rises from 4.9 bits to 6.5 bits, validating its design intent to enhance capacity potential. Similarly, within the context of \(d_\mathcal{T}=6\) and \(n_l=30\), comparing configuration (6,8,20) with (6,128,30) shows the third quartile of the capacity distribution increasing from 3.2 bits to 6.9 bits. However, the realization of this potential is constrained by security requirements and incurs a substantial computational burden. In the first comparative case with fixed other parameters, the median capacity does not rise with the increase in \(k\) but instead decreases from 4.9 bits to 3.1 bits, accompanied by a more dispersed distribution. This suggests that under stringent distance constraints, a larger search space may lead to increased outcome uncertainty rather than a stable improvement in effective payload. More critically, the computational cost is high. The encoding efficiency plummets from 7.3679 bits per second to 1.6014 bits per second in the first case. In the second comparative case, efficiency also sees a significant decline from 6.4399 bits per second to 2.1205 bits per second. Therefore, \(k\) is a parameter that requires careful trade-off consideration: it offers the possibility of raising the capacity ceiling, but its actual gain is strictly limited by security constraints and comes at an exponential increase in computational resource consumption.

\emph{The block length \(n_l\) is a parameter that simultaneously influences the robustness boundary, the capacity baseline, and computational efficiency.} Theoretically, \(n_l\) defines the length of the basic textual unit for each encoding operation, directly affecting the theoretical upper limit of information that can be carried per operation. The data show that, with other parameters held constant, increasing \(n_l\) can support a higher effective payload, although this trend may be reversed when combined with a high \(d_\mathcal{T}\). For example, comparing configuration (6,8,20) with (12,8,30), against the backdrop of a substantial increase in \(d_\mathcal{T}\), the payload further decreases from 0.1351 bits per token to 0.0887 bits per token despite \(n_l\) increasing from 20 to 30. However, under relatively moderate settings for \(d_\mathcal{T}\) and \(k\), such as \(d_\mathcal{T}=4\) and \(k=32\), increasing \(n_l\) from 10 to 20, moving from configuration (4,32,10) to (4,32,20), results in a slight rise in payload from 0.2125 to 0.2249 bits per token, while the minimum edit distance \(d_{\min}\) increases significantly from 4.0 to 9.0. Simultaneously, an increase in \(n_l\) linearly or non-linearly expands the search dimensionality of the encoding process, leading to a notable rise in computational complexity. This is corroborated by the trend observed in the encoding speed measured in tokens per second, which decreases from 36.7648 to 32.8736 in one comparison and from 47.6695 to 46.9241 in another, as \(n_l\) increases. Furthermore, \(n_l\) determines the textual scope over which the distance constraint \(d_\mathcal{T}\) is applied. A larger \(n_l\) means that the divergence requirement is enforced over longer contiguous text segments, thereby granting higher robustness against concentrated errors.


\noindent\textbf{The optimization strategy used by Alkaid has significantly improved its efficiency.} The performance impact of different optimization strategies is detailed in Table~\ref{tab:optimization_strategies}. Specifically, without any optimization applied, the system's encoding efficiency is only 1.9892 tokens per second or 0.3978 bits per second. Introducing the caching strategy results in a preliminary performance gain, increasing the encoding speed to 2.7197 tokens per second, an improvement of approximately 37 percent. However, the true leap in efficiency stems from the application of the batching strategy. With batching enabled, the encoding speed soars to 33.0656 tokens per second, representing an increase of over sixteen times compared to the baseline version. The effective encoding speed measured in bits per second also reaches 6.7329, showing a similarly dramatic increase. From the perspective of resource utilization, the efficiency gains are not achieved at the cost of a linear increase in resource consumption. The computational density metric, measured in tokens per second per gigabyte of GPU memory, increases dramatically from 0.1385 without optimization to 2.0434 with batching. This signifies that the encoding capability per unit of GPU memory is enhanced by nearly fifteen times, highlighting the exceptional effectiveness of the optimization strategies in improving hardware utilization efficiency. Concurrently, the system's usability metrics remain stable. Although batching requires more memory for parallel operations, increasing the peak GPU memory usage from approximately 14.5 GB to 17.8 GB, the quality of the generated text is unaffected. The entropy value stays consistent, and the perplexity remains at a consistently low level around 2.45, indicating that the naturalness and fluency of the text are preserved.
In conclusion, the optimization strategies for Alkaid, particularly the batching technique, have successfully increased its encoding efficiency by more than an order of magnitude while also achieving higher hardware utilization density. This dramatic improvement in efficiency is accomplished without compromising the usability of the generated text, underscoring the engineering efficacy and practical value of the optimization approach.

\textbf{The performance of Alkaid under different batch sizes.} Table~\ref{tab:batch_performance} details the performance of Alkaid under varying batch sizes. The data analysis indicates that increasing the batch size significantly enhances the system's operational efficiency, with performance gains stabilizing after reaching a certain scale, while all configurations maintain highly consistent text generation quality. In terms of operational efficiency, as the batch size increases from 16 to 128, encoding speed measured in tokens per second rises steadily from 17.0894 to 33.0286, a 93 percent improvement, and the effective bit encoding speed doubles from 3.4934 to 6.7249 bits per second. Notably, the performance improvement exhibits diminishing marginal returns. The most substantial gain occurs when scaling from 16 to 64, after which the performance curve plateaus, with nearly identical results for batch sizes 128 and 144. This suggests that a batch size between 64 and 128 likely approaches a performance saturation point under the current hardware and model configuration. The computational density metric shows a corresponding growth from 1.0815 to 2.0411 tokens per second per gigabyte of GPU memory, indicating that larger batches more fully utilize parallel computing units and substantially improve hardware efficiency. Meanwhile, the peak GPU memory consumption remains relatively stable. Regarding usability, key metrics remain excellent and stable across all configurations: text entropy stays within the narrow range of 1.15 to 1.17, and perplexity remains low, between 2.45 and 2.51. This demonstrates that batch processing optimization achieves a major throughput increase without compromising the semantic coherence or linguistic quality of the generated text. In summary, adjusting the batch size allows Alkaid to effectively improve encoding efficiency and hardware utilization density without affecting text usability.

\section{Conclusion}

Conventional Provably Secure Steganography (PSS) offers strong security guarantees but fails in practice due to its fragility to common edit errors. To bridge this gap between theory and robustness, we propose Alkaid, a system that embeds error resilience directly into the secure encoding process via distance-constrained encoding. By ensuring a minimum separation between the carrier sequences for different messages, Alkaid guarantees that a message remains uniquely decodable as long as the accumulated edit errors do not exceed a defined bound. We formally prove that this construction preserves the computational indistinguishability central to PSS while providing deterministic robustness against bounded distortions. Experiments demonstrate that Alkaid effectively withstands realistic edits, achieving a rigorous and practical unification of provable security and error resilience.

\bibliographystyle{ieeetr}
\bibliography{ref}

\appendices

\section{Proof}
\subsection{Proof of Theorem~\ref{thm:distance-constraint-security}} \label{appendix:proof_theoreticsecurity}

\begin{proof}
Let $M$ be the random message with conditional distribution $\Pr[M = m \mid \xi] = \frac{|\mathcal{O}^\xi_m|}{k}$. For any $\x \in \Xn$, by the law of total expectation, we have
\begin{equation}
\ppi(\x) = \Pr[\Enc(M,\xi) = \x] = \mathbb{E}_{\xi}\Bigl[ \Pr[\Enc(M,\xi) = \x \mid \xi] \Bigr].
\end{equation}

Given $\xi$, the  candidate sequences in the codebook $\B^\xi = \{\x^{(1)}, \dots, \x^{(k)}\}$ are independent and identically distributed according to $\ptheta$. Define indicator variables $\mathbf{1}_i(\x) = \mathbf{1}\{\x^{(i)} = \x\}$, then
\begin{equation}
t_\x(\xi) = \sum_{i=1}^k \mathbf{1}_i(\x),
\end{equation}
\begin{equation}
\quad \mathbb{E}[t_\x(\xi)] = \sum_{i=1}^k \Pr[\x^{(i)} = \x] = k \cdot \ptheta(\x).    
\end{equation}

The grouping $\{\mathcal{O}^\xi_m\}$ is determined by $\B^\xi$. When $\eta$ is sampled randomly, the encoder satisfies: given $M = m$, $\Enc(M,\xi)$ selects a sequence instance from $\mathcal{O}^\xi_m$ according to the prescribed probability distribution. Therefore,
\begin{equation}
\Pr[\Enc(M,\xi) = \x \mid \xi] 
\begin{aligned}[t]
    & = \sum_{m \in \msg^\xi} 
    \Pr[M = m \mid \xi] \\
    &\cdot \Pr[\Enc(M,\xi) = \x \mid M = m,\, \xi].
  \end{aligned}
\end{equation}

It follows that
\begin{equation}
\begin{aligned}[t]
\Pr[\Enc(M,\xi) = \x \mid \xi] &= \sum_{m: \x \in \mathcal{O}^\xi_m} \frac{|\mathcal{O}^\xi_m|}{k} \cdot \frac{t_\x(\xi)}{|\mathcal{O}^\xi_m|} \\
&= \sum_{m: \x \in \mathcal{O}^\xi_m} \frac{t_\x(\xi)}{k}.
\end{aligned}
\end{equation}

For fixed $\xi$, according to the grouping strategy, $\x$ belongs to a unique group $\mathcal{O}^\xi_{m^*}$, so the sum reduces to a single term:
\begin{equation}
\Pr[\Enc(M,\xi) = \x \mid \xi] = \frac{t_\x(\xi)}{k}.   
\end{equation}

Taking expectation over $\xi$ gives
\begin{equation}
\begin{aligned}[t]
\ppi(\x) &= \mathbb{E}_{\xi}\!\left[ \frac{t_\x(\xi)}{k} \right] = \frac{1}{k} \mathbb{E}[t_\x(\xi)] \\
&= \frac{1}{k} \cdot k \cdot \ptheta(\x) = \ptheta(\x).   
\end{aligned}
\end{equation}

Since $\x$ is arbitrary, we have $\ppi = \ptheta$. According to the Definition~\ref{def:information-theoretic-security}, the system is \textbf{perfectly secure}.
\end{proof}

\subsection{Proof of Theorem~\ref{thm:distance-constraint-robustness}}\label{appendix:dce-roubust}
\begin{proof}
Let $\x = \Enc(m,\xi)$ and $\tilde{\x} = \Ce(\x)$, and define the random variable $D = \dist(\x, \tilde{\x})$ representing the total number of edit operations introduced by the channel. According to the channel definition, let $V_i \in \{0,1\}$ indicate whether the $i$-th position undergoes a basic edit operation. Then $\{V_i\}_{i=1}^n$ are independent with $\mathbb{E}[V_i] = e$, and we have
\begin{equation}
D \le \sum_{i=1}^n V_i, \quad \mathbb{E}[D] \le e n.
\end{equation}

If $D < \frac{d_\mathcal{T}}{2}$, then for any $m' \neq m$, by the triangle inequality for edit distance,
\begin{equation}
\begin{aligned}
\dist(\tilde{\x}, \Enc(m', \xi)) 
&\ge \dist(\x, \Enc(m', \xi)) - \dist(\x, \tilde{\x}) \\
&> d_\mathcal{T} - \frac{d_\mathcal{T}}{2} 
= \frac{d_\mathcal{T}}{2}.
\end{aligned}
\end{equation}

At the same time, we have 
$
\dist(\tilde{\x}, \Enc(m, \xi)) \le \dist(\tilde{\x}, \x) < \frac{d_\mathcal{T}}{2},
$
so minimum distance decoding will necessarily output $m$. Therefore, a decoding error can occur only if $D \ge \frac{d_\mathcal{T}}{2}$, which gives
\begin{equation}
\Pr[\Dec(\tilde{\x}, \xi) \neq m] \le \Pr\!\left[D \ge \frac{d_\mathcal{T}}{2}\right].    
\end{equation}

Let $S = \sum_{i=1}^n V_i$, which is the sum of $n$ independent random variables taking values in $[0,1]$, with $\mathbb{E}[S] = e n$. Define
$
t = \frac{d_\mathcal{T}}{2} - e n > 0.
$
By Hoeffding's inequality, we have 
\begin{equation}
\Pr[S - \mathbb{E}[S] \ge t] \le \exp\!\left(- \frac{2 t^2}{n} \right).
\end{equation}

Since $D \le S$, it follows that
\begin{equation}
\begin{aligned}
\Pr\!\left[D \ge \frac{d_\mathcal{T}}{2}\right]
&\le \Pr\!\left[S \ge \tfrac{d_\mathcal{T}}{2}\right] \\
&= \Pr\!\left[S - e n \ge t\right] \\
&\le \exp\!\left(-\frac{2}{n} \left(\frac{d_\mathcal{T}}{2} - e n\right)^2 \right).
\end{aligned}
\end{equation}

Rewriting gives
\begin{equation}
\Pr[\Dec(\Ce(\Enc(m,\xi)), \xi) \neq m]
\le \exp\!\left(- \frac{(d_\mathcal{T} - 2 e n)^2}{2 n} \right),
\end{equation}
which completes the proof.
\end{proof}

\subsection{Proof of Theorem~\ref{thm:computational-security}}\label{appendix:computational-security}

\begin{proof}
During encoding, the system uses the pseudorandom generator $\mathsf{PRG}$ to derive the encoding parameters
$
\xi = (\zeta_1,\dots,\zeta_k,\eta),
$
from the shared seed $\mathsf{sk}$, where each component is generated deterministically from the output of $\mathsf{PRG}(\mathsf{sk})$. In practical implementation, we can segment the $\mathsf{PRG}$  randomly generated sequence into blocks and process them into encoding parameters.

We define the following hybrid experiments.

\emph{Experiment} $\mathsf{Exp}_0(\lambda)$: Sample a message $m \sim \pmsg$ and a shared seed $\mathsf{sk} \sim \{0,1\}^\lambda$. The encoder derives all encoding parameters $\xi$ from $\mathsf{PRG}(\mathsf{sk})$ and runs $\Enc(m,\xi)$ to generate the output sequence $\x$, which is then provided to the distinguisher $\mathcal{A}$.

\emph{Experiment} $\mathsf{Exp}_1(\lambda)$: Same as $\mathsf{Exp}_0$, but all random components of $\xi$ are replaced by independent uniform samples from their respective domains.

\medskip
By the computational indistinguishability of $\mathsf{PRG}$, there exists a negligible function $\negl(\cdot)$ such that for any PPT distinguisher $\mathcal{A}$,
\begin{equation}
\left| \Pr[\mathcal{A}(\mathsf{Exp}_0(\lambda))=1] - \Pr[\mathcal{A}(\mathsf{Exp}_1(\lambda))=1] \right| \le \negl(\lambda).  
\end{equation}

Next, consider the output distribution induced in $\mathsf{Exp}_1(\lambda)$. In this experiment, the components of the encoding parameters
$
\xi = (\zeta_1,\dots,\zeta_k,\eta)
$
are mutually independent and independently sampled according to their respective domains.

Specifically, during codebook construction, each component $\zeta_i$ can be represented as a random vector
$
\zeta_i = (r_{i,1},\dots,r_{i,n_l}),
$
where each element $r_{i,t}$ is independent and uniformly distributed. Hence, the joint distribution of $\xi$ satisfies all conditions regarding independence and distribution matching required in Theorem~\ref{thm:distance-constraint-security}.

\section{The Analysis of Robustness} \label{appendix:analysisofrobust}

\begin{figure*}[!t]%
\centering
\includegraphics[width=0.95\linewidth]{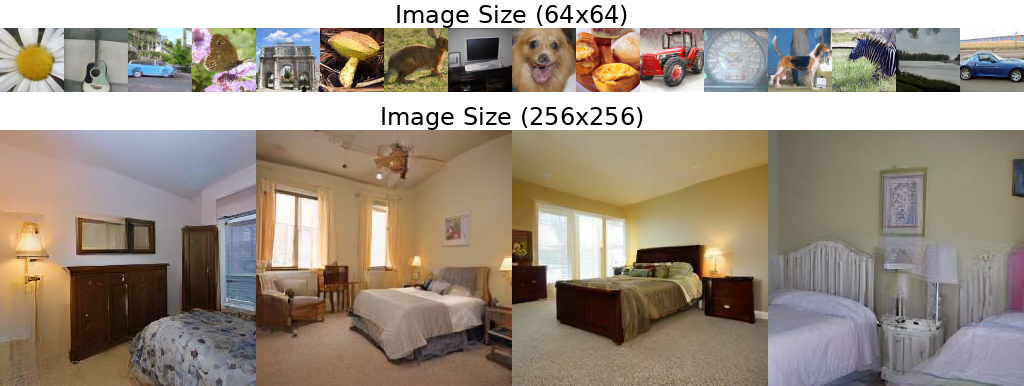}
    \caption{Samples of employing distance-constrained encoding on Consistency Models.}
    \label{fig.cd_generate}
\end{figure*}

\begin{figure}[!t]%
\centering
\includegraphics[width=0.85\linewidth]{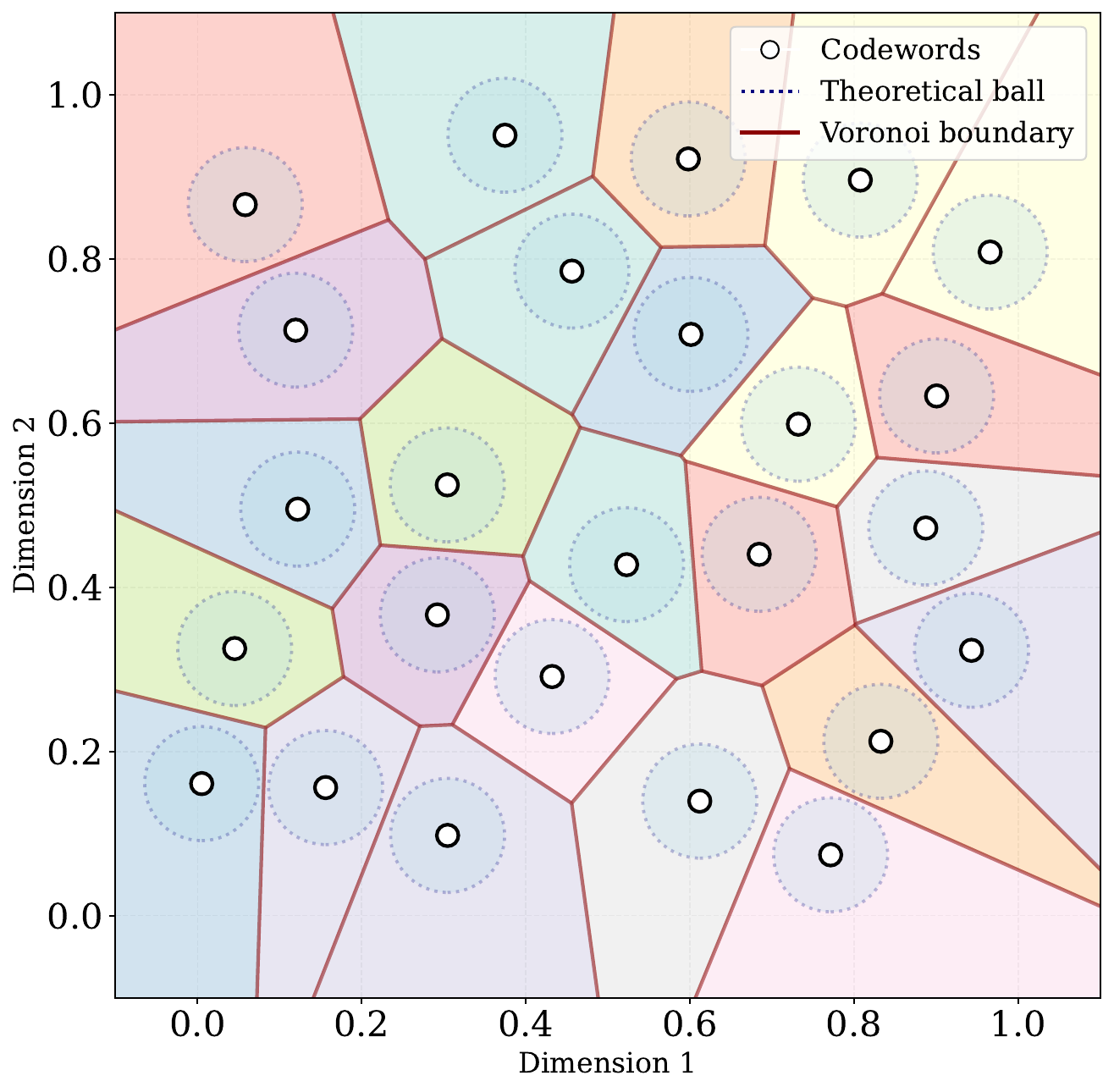}
    \caption{Actual Voronoi Decoding Regions.}
    \label{fig.voronoi}
\end{figure}

Moreover, the adaptive message encoding mechanism does not change the induced distribution over the message space $\msg^\xi$, so that
$
\pmsg^\xi(m) = \frac{|\mathcal{O}^\xi_m|}{k}.
$

Therefore, by Theorem~\ref{thm:distance-constraint-security}, the system is perfectly secure in $\mathsf{Exp}_1(\lambda)$. Thus, we have 
$
\ppi = \ptheta.
$
Consequently,
\begin{equation}
\Pr[\mathcal{A}(\mathsf{Exp}_1(\lambda))=1] = \Pr_{\x \sim \ptheta}[\mathcal{A}(1^\lambda,\x)=1].    
\end{equation}

Combining the above inequality with the triangle inequality gives
\begin{equation}
\begin{aligned}
\mathsf{Adv}_{\mathcal{A},\Pi}(\lambda)
&= \left| \Pr[\mathcal{A}(\mathsf{Exp}_0(\lambda))=1] - \Pr_{\x \sim \ptheta}[\mathcal{A}(1^\lambda,\x)=1] \right| \\
&\leq \negl(\lambda).
\end{aligned}
\end{equation}
which shows that the system $\Pi$ satisfies Definition~\ref{def:computational-indistinguishability-security} and is \textbf{computationally secure}.
\end{proof}

The robustness proof in Theorem~\ref{thm:distance-constraint-robustness} is based on a worst-case assumption where the minimum distance between codewords exactly equals the design threshold $d_\mathcal{T}$, yielding the corresponding theoretical bound. However, in Alkaid's practical implementation, the sample size $k$ is far smaller than the codebook space, and this sparsity renders the $d_\mathcal{T}$-derived theoretical bound conservative. To verify Alkaid's robustness under realistic configurations, we conducted a targeted experiment: 100 codewords were randomly sampled from a 100-ary alphabet of length 100, with a strict constraint enforcing the Hamming distance between any pair of codewords to exceed $d_\mathcal{T}$, while Euclidean distance was incorporated as a hard decision criterion in minimum-distance decoding. As shown in Figure~\ref{fig.voronoi}, all Voronoi cells expand outward and completely envelop the theoretical Hamming balls. Despite exhibiting irregular geometric shapes due to local density fluctuations from random sampling, every cell strictly exceeds the theoretical boundary. This phenomenon directly demonstrates that sparsity enlarges the actual decoding regions, causing the effective error-correction radius to universally surpass $d_\mathcal{T}/2$. Combined with the observation in Table~\ref{tab:performance_comparison} that $d_{\min}$ consistently exceeds $d_\mathcal{T}$, Alkaid's deployed decoding failure rate will be lower than the worst-case theoretical upper bound derived from $d_\mathcal{T}$, thereby providing substantive assurance for its reliable deployment in real-world noisy environments.

Furthermore, due to Alkaid's block-wise processing architecture, error-correction capability remains confined within individual blocks without cross-block support. Consequently, the system exhibits limited resilience against clustered errors concentrated in localized regions. Therefore, practical deployment requires careful selection of block size $n_l$ and sample size $k$ to balance computational efficiency and robustness.

\section{Steganographic test results} \label{appendix:testresult}

\begin{table}[!t]
\centering
\caption{Steganalysis results for Alkaid.}
\renewcommand{\arraystretch}{1.2}
\begin{tabular}{c c}
\toprule
Methods & Detection Rate\% \\
\midrule
FCN~\cite{yang2019fast}         & 50.11$\pm$0.22\% \\
R-BiLSTM-C~\cite{yang2020linguistic} & 49.82$\pm$0.43\% \\
LSTMATT~\cite{zou2020high}      & 50.05$\pm$0.49\% \\
\bottomrule
\end{tabular}

\label{tab:steganalysis}
\end{table}

To comprehensively evaluate the security of Alkaid, we empirically tested it using multiple mature steganalysis methods. The goal was to distinguish between randomly sampled text and stego text generated by Alkaid. We employed three deep learning-based approaches proven effective: FCN~\cite{yang2019fast} , R-BiLSTM-C~\cite{yang2020linguistic} , and LSTMATT~\cite{zou2020high} .

As shown in Table~\ref{tab:steganalysis}, the detection rate for Alkaid is close to 50\%. In steganalysis, deviations below 0.5\% are generally considered negligible. Our experimental results, which show an detection rate below 0.5\%, therefore indicate that current steganalysis methods perform no better than random guessing in identifying Alkaid-generated content. These findings provide empirical support for the security of Alkaid.

\section{Attempt on other generative models}\label{appendix:attempt}

\begin{table}[!t]
\centering
\renewcommand{\arraystretch}{1.2}
\caption{Performance of distance constrained encoding in different generative models.}
\begin{tabularx}{0.47\textwidth}{P{1.7cm}P{1.9cm}ZZ}
\toprule
Methods & Payload & Speed & Success Rate \\
\midrule
Dream~\cite{ye2025dream} & 0.13 bits/token & 0.11 bits/s & 100\%\\
CM-64x~\cite{song2023consistency} & 2.2e-3 bits/pixel & 5.91 bits/s & 100\%\\
CM-256x~\cite{song2023consistency} & 1.2e-3 bits/pixel & 0.63 bits/s & 100\%\\
\bottomrule
\end{tabularx}

\label{tab:attampttoother}
\end{table}

While prior research has achieved certain outcomes, it typically relies on generative models providing an explicit probability distribution. This requirement limits their applicability to  generative methods such as diffusion models and VAEs, as these models generally do not directly output probability distributions~\cite{jois2024pulsar,lee2025removal,peng2023stegaddpm}. In contrast, distance-constrained encoding only requires the model to be capable of generating samples, making it suitable for a wider range of generative models. To verify this, we conducted experimental evaluations on several generative models.

First, we employed Dream-7B~\cite{ye2025dream}, a diffusion language model based on a denoising mechanism. By controlling its random sampling process, we generated 256 sequences of length 64 with 300 sampling steps. The results, shown in Table~\ref{tab:attampttoother}, indicate that compared to the STEAD method in Table~\ref{tab:performance_comparison}, distance-constrained encoding shows no significant decrease in embedding capacity and rate. Due to the higher uncertainty in the generation process of diffusion models, the differences between samples are substantial, leading to a corresponding increase in decoding success rate.

Furthermore, we conducted experiments on Consistency Models~\cite{song2023consistency}. Since this model does not directly provide a probability distribution, we controlled the generation by adjusting the initial random noise. This still maintains security. We tested two model scales: one generating 256 images of size 64×64 and the other generating 256 images of size 256×256. As shown in Fig.~\ref{fig.cd_generate} and Table~\ref{tab:attampttoother}, the embedding rate and speed are higher in the smaller model but decrease in the larger model. However, due to significant differences between generated images, the decoding success rate remains high. This suggests that images inherently possess high entropy, providing a potential space for encoding. Nevertheless, direct global encoding is inefficient, and a block-based mechanism similar to Alkaid is needed to reduce the exponential complexity of distance-constrained encoding to linear, thereby improving practical efficiency.

These experiments demonstrate that distance-constrained encoding performs well across generative models of different types and scales. Its core advantage lies in relying only on sample generation capability, rather than specific probability distribution forms. This opens up possibilities for a unified distance-constrained encoding framework across various models. Future work could explore optimization strategies tailored to different model characteristics, such as blocking, pruning, and parallelization. Due to space limitations, this paper primarily focuses on efficient implementation in autoregressive large language models. \emph{Systematic optimization and theoretical refinement under different generative paradigms remain important directions for further research.}

\end{document}